\newcommand{\tr}{\textrm{tr}}
\def\z{\bold{0}}
\newtheorem{theorem}{Theorem}
\newtheorem{lemma}{Lemma}
\newcommand{\setlabel}[1]{\edef\@currentlabel{#1}\label}
\title{Alternating Layered Variational Quantum Circuits Can Be Classically Optimized Efficiently Using Classical Shadows 

}
\title{Alternating Layered Variational Quantum Circuits Can Be Classically Optimized Efficiently Using Classical Shadows}
\author {
    Afrad Basheer\footnote{Email: Afrad.M.Basheer@student.uts.edu.au},\textsuperscript{\rm 1}
    Yuan Feng, \textsuperscript{\rm 1}
    Christopher Ferrie, \textsuperscript{\rm 1}
    Sanjiang Li \textsuperscript{\rm 1}
}
\begin{document}

\maketitle

\begin{abstract}
Variational quantum algorithms (VQAs) are the quantum analog of classical neural networks (NNs). A VQA consists of a parameterized quantum circuit (PQC) which is composed of multiple layers of ansatzes (simpler PQCs, which are an analogy of NN layers) that differ only in selections of parameters. Previous work has identified the alternating layered ansatz as potentially a new standard ansatz in near-term quantum computing. Indeed, shallow alternating layered VQAs are easy to implement and have been shown to be both trainable and expressive. In this work, we introduce a training algorithm with an exponential reduction in training cost of such VQAs. Moreover, our algorithm uses classical shadows of quantum input data, and can hence be run on a classical computer with rigorous performance guarantees. We demonstrate 2--3 orders of magnitude improvement in the training cost using our algorithm for the example problems of finding state preparation circuits and the quantum autoencoder.
\end{abstract}

\section{Introduction}
The past several years have seen tremendous progress in quantum hardware and one of the state-of-the-art quantum computers---IBM Eagle---now has 127 qubits~\cite{Chow2021}. However, it is widely believed that general-purpose fault-tolerant quantum computers are unlikely to happen soon. With a limited number of (noisy) qubits and non-perfect gates, near-term quantum devices like IBM Eagle can only run quantum circuits with small depth. One may wonder if quantum advantage can still be demonstrated in practical applications using these near-term quantum devices. Variational Quantum Algorithms (VQAs) are the most promising proposal for this purpose. VQAs encode a task in a parametrized quantum circuit (PQC), evaluate it using a near-term quantum computer, and optimize the parameters with a classical optimizer. For almost all applications for which quantum computers are desired, VQAs have been proposed. We refer the readers to \cite{Cerezo2021_VQA} for an excellent survey of VQAs.

VQAs can be regarded as the quantum analog of classical neural networks (NNs). A VQA is composed of multiple layers of ansatzes, which are smaller PQCs with a fixed architecture. An ansatz is analogous to a classical NN layer, and a VQA uses ansatzes that have the same architecture and differ only in selections of parameters. Many different ansatzes have been proposed in the literature. Popular choices include the Hardware Efficient Ansatz~\cite{Kandala2017}, Quantum Alternating Operator Ansatz (QAOA)~\cite{Hadfield2019}, etc.

Similar to classical NNs, recent works show that VQAs also have the problem of trainability. Indeed, vanishing gradients, also called \textit{barren plateaus}, were first theoretically demonstrated for deep PQCs by \cite{Mcclean2018} and then for shallow PQCs by \cite{Cerezo2021}. We may need to impose restrictions on the ansatz and/or the cost function to avoid this barren plateaus issue. One such solution is proposed in \cite{Cerezo2021} using the alternating layered ansatz, which has a brick-like structure and its two-qubit gates act on alternating pairs of neighboring qubits (see Fig.~\ref{fig:ala} for a simple example). It is proved that barren plateaus can be avoided for alternating layered VQAs provided that the depth of the PQC is $O(\log n)$, where $n$ is the number of qubits, and the cost function is defined with local observables. Surprisingly, \cite{Nakaji2021} recently proved that the shallow alternating layered ansatz is almost as expressive as the Hardware Efficient Ansatz. Thus the alternating layered ansatz is both expressive and trainable. In addition, this ansatz has been investigated or implemented in works such as~\cite{Hinsche2021,Wu2021,Arrasmith2021,Slattery2022}. In particular,~\citep{Cerezo2021} has shown that quantum autoencoders can be optimized without any barren plateaus for up to 100 qubits using the alternating layered ansatz. 

This work introduces a training algorithm with an exponential improvement in the number of copies of input states consumed during training an alternating layered VQA with shallow depth and local observables, where, and in the remainder of this paper, the number of copies of the input state equals the number of executions of the quantum device. The reason is, during each execution, we have to measure the quantum systems which inevitably destroy the quantum states. Thus a fresh copy of the input state is needed for each iteration.
Moreover, the training can be done entirely on a classical computer efficiently (with computational cost depending only polynomially on $n$) without the need to implement the PQC on a quantum device. This result is achieved by using the recently proposed classical shadow technique \cite{Aaronson2018,Huang2021} for quantum state tomography, and working in the Heisenberg picture rather than the Schr\"{o}ndinger model.

Specifically, for an alternating layered ansatz $U(\boldsymbol{\theta})$, an input state $\rho$ and an observable $O$, the VQAs of our interest estimate each evaluation of functions of the form 
\begin{align} \label{eq:vqa-function}
    f_{\rho, O}(\boldsymbol{\theta}) = \text{tr} (O U(\boldsymbol{\theta}) \rho U(\boldsymbol{\theta}) ^ {\dag} )
\end{align} 
using quantum computers. In contrast, our method can efficiently compute this classically on classical shadows of $\rho$. But note that all VQAs that uses alternating layered ansatzes need not have this specific form.

Our method, called \textit{Alternating Layered Shadow Optimization}, or simply ALSO, outperforms standard alternating layered VQA in two aspects:
\begin{enumerate}
    \item \emph{Exponential savings on input state copies.} Note that the number of copies of the input state needed in the standard VQA scales linearly in the total number of function evaluations required. In contrast, to achieve a similar precision, ALSO only uses logarithmically many copies. This allows for more iterations and better approximations in the classical optimization algorithm for a given PQC. In addition, it allows for more hyperparameter tuning with very few copies of the input state, and the same set of shadows can be used for multiple similar optimization problems and alternating layered ansatzes.

    \item \emph{Easy implementation on quantum hardware.} ALSO only requires the quantum device to be able to carry out single-qubit Pauli basis measurements on the input states. But standard VQA requires the ability to apply CNOT gates and rotation gates on them, and measurement also.

\end{enumerate}

    \begin{figure} 
    	\begin{center} 
    		\includegraphics[width = \columnwidth]{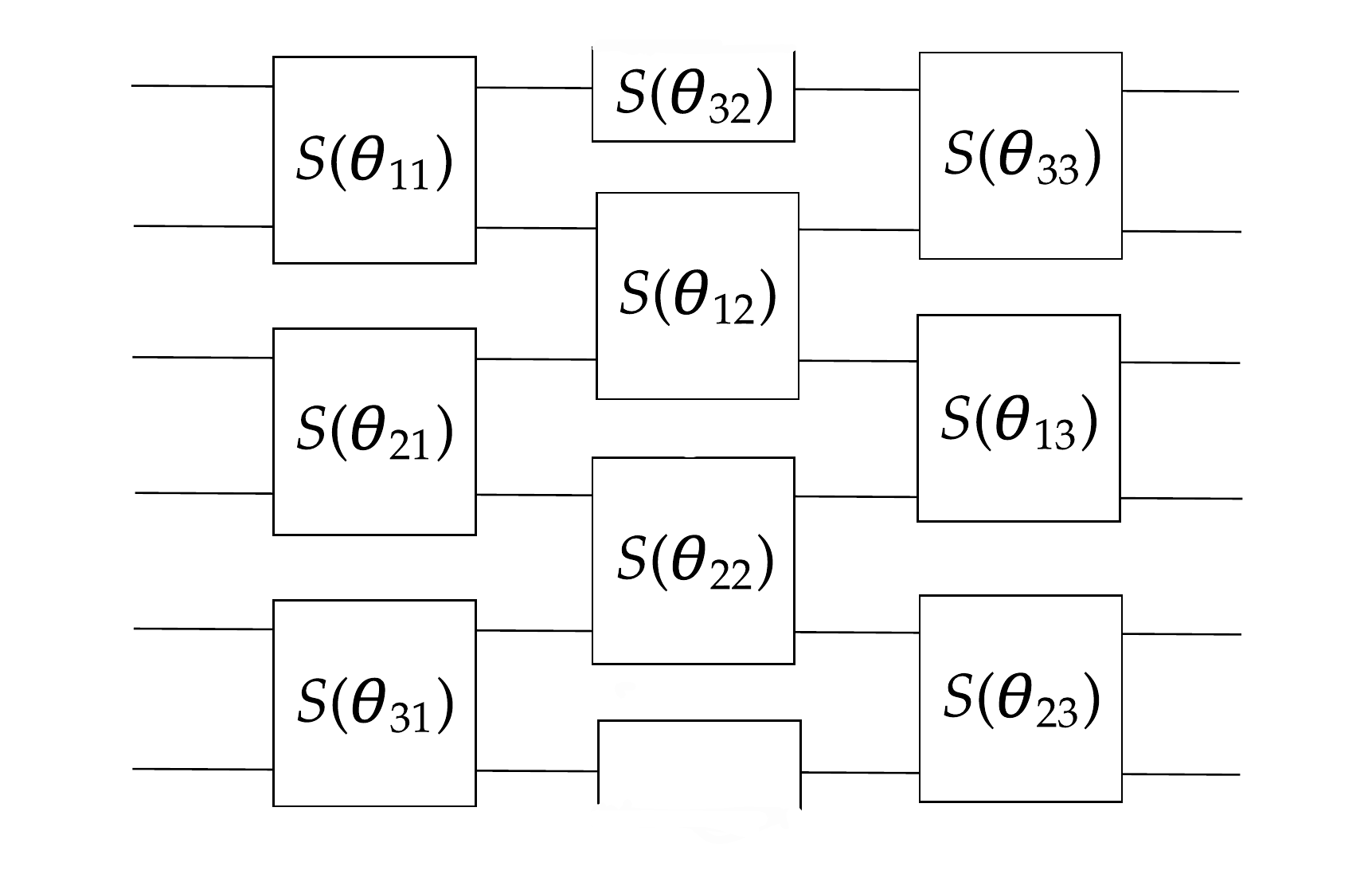}
    		\caption{An illustration of alternating layered ansatzes where the parameterized sub-circuit $S(\boldsymbol{\theta}_{32})$ is applied on the first and the last qubits. Here, $\boldsymbol{\theta}$ is an order $3$ tensor with each $ \boldsymbol{\theta}_{ij}$ being vectors of real parameters.
    		}
    		\label{fig:ala}
    	\end{center}
    \end{figure}

We demonstrate the practical efficacy of our result with two important examples: finding state preparation circuits and quantum data compression using a quantum autoencoder. In both cases, we demonstrate that ALSO can match the results of the impossible ideal VQA that uses infinite copies, using a comparatively small number of copies of the input quantum state. We also show that, with the same number of copies of the input state, ALSO outperforms the standard VQA  significantly.

\section{Related Works} \setlabel{Related Works}{sec:related_works}

    The idea of using classical algorithms, specifically classical machine learning algorithms, on classical shadows of quantum states, has been considered in~\cite{Huang2021_power,Huang2021}, where popular classical machine learning algorithms such as Support Vector Machines and Convolutional Neural Networks are trained on classically loaded shadows to solve certain important problems in quantum many-body physics. In these works, classical shadows act as low dimensional embeddings of quantum states extracted and generated using a quantum computer, and the aim of the classical post-processing is to learn powerful classifiers and prediction models. In comparison, the post-processing of classical shadows in ALSO aims to implement a VQA in a more resource efficient manner. 
    
    Variational shadow quantum circuits, developed in  \cite{Li2021},  extract local classical features by focusing on a series of local subcircuits (generated in a convolutional way and called \textit{shadow circuits}). Inspired by the classical shadow work, the approach itself does not use classical shadows. We remark that this approach can also be implemented on classical shadows of the input quantum states thanks to the locality of those shadow circuits. We expect that results similar to the ones presented in \cite{Li2021} could be achieved using an exponentially small number of copies of the input state. 

    Recent years have also seen the rise of dequantization techniques~\cite{Tang2019,Tang2021,Gilyen2018,Gilyen2022}, that is, classical algorithms that challenge the claim of quantum algorithms providing exponential speedup in solving certain classical problems. Even though our work can simulate the training of an alternating layered VQA very efficiently on a classical computer, we are dealing with a problem with quantum inputs, and still need to generate classical shadows of the input states. So ALSO cannot be considered as a dequantization of alternating layered VQAs.

\section{Background} \setlabel{Background}{sec:background}
This section recalls background in quantum computing, varational quantum algorithms and classical shadows. 

\subsection{Preliminaries}
    We use `ket' notation such as $\ket{\psi}$ to represent complex column vectors. For any $\ket{\psi} \in \mathbb{C}^d$, $\bra{\psi}$, called `bra', is its complex conjugated transpose. For any $i \in \{ 0,1,\dots,d-1\}$, $\ket{i} \in \mathbb{C}^d$ is the $i^{\text{th}}$ standard basis vector. Denote by $\mathcal{L}(\mathbb{C}^d)$ the set of all linear operators acting on $\mathbb{C}^d$.

    \subsection{Quantum computing}
    A \textit{qubit} is the fundamentally implementable entity in quantum computing. A \textit{state} is defined as any positive semi-definite operator $\rho \in \mathcal{L}(\mathbb{C}^d)$ such that $\text{tr}(\rho) = 1$ (all the diagonal elements sum up to $1$). A qubit can admit any quantum state $\rho \in \mathcal{L}(\mathbb{C}^2)$ as its value, similar to how a \textit{bit} in classical computing can admit any value in $\{ 0,1\}$. A quantum \textit{system} or \textit{register} is an array of qubits. To describe the state of a system of two qubits, $[q_1, q_2]$, we use states that act on the \textit{tensor product} of the two $2$-dimensional vector spaces, denoted as $ \mathbb{C}^2 \otimes \mathbb{C}^2 \cong \mathbb{C}^4$. So, a system of $n$-qubits can be in any state in $\mathcal{L}(\mathbb{C} ^ {2^n})$.  
    A state $\rho$ is called a \textit{pure state} if its rank is $1$. In this case, $\rho$, as well as its dynamics, can be fully characterized by any normalized eigenvector associated with the eigenvalue 1. 
    
    An $n$-qubit \textit{quantum gate} is defined as a unitary operator $U \in \mathcal{L}(\mathbb{C} ^ {2 ^ n})$. The application of such a quantum gate on a system in a state $\rho \in \mathcal{L}(\mathbb{C} ^ {2 ^ n})$ transforms the state of the system from $\rho$ to another state $U\rho U ^{\dag}$.     
    Some important gates that feature extensively in this work are 
    
    \begin{align*}
            H = \frac{1}{\sqrt{2}}
            \begin{bmatrix}
                1 & 1 \\
                1 & -1
            \end{bmatrix}, \  
             S = 
            \begin{bmatrix}
                1 & 0 \\
                0 & i
            \end{bmatrix},\ 
             \text{CNOT} = \bigg[
            \begin{smallmatrix}
                1 & 0 & 0 & 0 \\
                0 & 1 & 0 & 0 \\
                0 & 0 & 0 & 1 \\
                0 & 0 & 1 & 0
            \end{smallmatrix}
            \bigg].
    \end{align*}
    Let $Z = \ket{0}\bra{0}-\ket{1}\bra{1}$,  $X=HZH$, and $Y=iXY$. These single-qubit gates are called \textit{Pauli operators}. For $P \in \{ X,Y,Z\}$ and $\theta\in \mathbb{R}$, the $P$-rotation gate $R_P(\theta)$ is the gate $\cos(\theta/2) \mathds1 + i \sin(\theta/2) P $, where $\mathds1$ is the identity operator.
    
    An $n$-qubit \textit{observable} is defined as any Hermitian operator $O \in \mathcal{L}(\mathbb{C}^{2^n})$. To ``observe'' information from a system prepared in a state $\rho$, we \textit{measure} $\rho$ using $O$. The measurement result is probabilistic and its expected value is $\text{tr}(O\rho)$. 

An $n$-qubit operator $V \in \mathcal{L}(\mathbb{C}^{2^n})$ on register $I=[q_1,\ldots,q_n]$ 
is \textit{$k$-local} if there is a $k$-qubit sub-register $A=[q_{i_1}, q_{i_2},\dots,q_{i_k}]$ such that $V = \widetilde{V} \otimes \mathds1$, where $ \widetilde{V} \in \mathcal{L}(\mathbb{C}^{2^k})$ acts on qubits in $A$ and $\mathds1$ is the identity operator on $I\setminus A$. For simplicity, we often write $V$ as 
$\widetilde{V}[A]$ and say that $V$ acts \textit{non-trivially} on qubits only in $A$. 

    \subsection{Classical shadows using Pauli basis measurements} \label{subsec:classical_shadows}
    
   Let $\rho$ be an $n$-qubit quantum state and let $O_1, O_2, \dots, O_M$ be arbitrary $n$-qubit observables the classical descriptions of which are given. 
   Using conventional quantum tomography techniques, $O(2^n)$ copies of $\rho$ are required to estimate 
   $\text{tr}( O_i\rho)$ for each $O_i$.
   
        The classical shadow technique \cite{Huang2020},  developed from  shadow tomography \cite{Aaronson2018},  provides succinct classical descriptions of quantum states. Using this technique, $\tr(O_i \rho)$ can be collectively predicted by consuming only $\mathcal{O}(\log M)$ copies of $\rho$. Moreover, when these observables belong to certain classes, the dependency on $n$ is polynomial or constant.

        When the observables are all local (with locality $k \ll n$), the classical shadow method reduces to a very simple protocol. The first step is to measure the individual qubits of $\rho$ on a random Pauli basis. To this end, for each qubit $i$, we apply a gate $U_i$ uniformly randomly chosen from $ \left\{ \mathds1, H, HS ^ {\dag} \right\}$, and then measure it in the computational basis. Let the measurement outcome be $u_i\in\set{0,1}$. Then a
        \textit{single-qubit classical shadow} of $\rho$ is calculated (classically) as 
        \begin{equation}\label{eq:shadow}
        \hat{\rho} = F(U_1 ^ {\dag} \ket{u_1} \bra{u_1} U_1) 
                 \otimes \dots
                \otimes F \left( U_n ^ {\dag} \ket{u_n} \bra{u_n} U_n \right),
        \end{equation}
        where $F(V) = 3V - \mathds1$.  As a fully separable quantum state, the state $\hat{\rho}$ can be stored efficiently as $n \ 2\times 2$ matrices. Furthermore, $\hat{\rho}$ gives an unbiased estimation of the unknown state $\rho$ and hence $\text{tr}(O_i \hat{\rho})$ is an unbiased estimator of $ \text{tr}( O_i\rho)$ for all $i$. 
        
        Specifically, we have:

        \begin{theorem}~\cite{Sack22} \label{th:shadows}
            Let $\rho \in \mathbb{C} ^ {2 ^ n}$ be a quantum state. Suppose $O_1, O_2, \dots, O_M \in \mathbb{C} ^ {2 ^ n}$ are $M$ $k$-local observables. For any $\delta, \epsilon \in \left( 0, 1\right)$, let $T$ be any integer not smaller than $\frac{4 ^ {k + 1}}{\epsilon ^ 2} \cdot  \log(\frac{2M}{\delta}) \max_{i} \| O_i \|_{\infty}^2$
            and define shadow state $\hat{\rho}_T$ as $ \hat{\rho}_T = \frac{1}{T} \sum_{j = 1}^T \hat{\rho} ^ {(j)}$, where  $ \hat{\rho} ^ {(j)}$ are single-qubit classical shadows as in Eq.~\eqref{eq:shadow}. Then, with probability at least $1-\delta$ and for all $i$, we have $\left| \tr ( O_i\hat{\rho}) - \tr ( O_i\rho) \right| \leq \epsilon. $
        \end{theorem}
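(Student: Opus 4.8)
The plan is to regard, for each fixed observable $O_i$, the quantities $\hat X_i^{(j)} := \tr(O_i\hat\rho^{(j)})$ as independent, identically distributed random variables indexed by $j=1,\dots,T$, to show that their common mean is exactly $\tr(O_i\rho)$, to control their spread, and then to apply an exponential concentration inequality followed by a union bound over the $M$ observables. Since $\hat\rho_T=\frac1T\sum_{j=1}^T\hat\rho^{(j)}$ and the trace is linear, $\tr(O_i\hat\rho_T)=\frac1T\sum_{j=1}^T\hat X_i^{(j)}$, so the claim reduces entirely to a concentration statement for empirical averages of i.i.d.\ bounded variables.

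First I would establish unbiasedness. Writing a single-qubit snapshot as $\hat\sigma=3\,U^\dagger\ket{u}\bra{u}U-\mathds1$ (consistent with $F(V)=3V-\mathds1$ for the rank-one, unit-trace projector $V$), the key fact is that averaging over the random basis $U\in\{\mathds1,H,HS^\dagger\}$ and the computational-basis outcome $u$ inverts the single-qubit Pauli measurement channel, i.e.\ $\mathbb E[\hat\sigma]=\sigma$ for the corresponding single-qubit reduced state $\sigma$. Tensoring this identity across the $n$ qubits yields $\mathbb E[\hat\rho^{(j)}]=\rho$, hence $\mathbb E[\hat X_i^{(j)}]=\tr(O_i\rho)$ and $\mathbb E[\tr(O_i\hat\rho_T)]=\tr(O_i\rho)$. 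This is the property the surrounding text already asserts, so I would only verify the single-qubit channel-inversion identity and let the product structure do the rest.

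Next I would exploit $k$-locality to bound the fluctuations of $\hat X_i^{(j)}$. Because $\hat\rho^{(j)}=\bigotimes_{l=1}^n\hat\sigma_l$ is a product and $O_i=\widetilde O_i[A_i]$ acts nontrivially only on the $k$-qubit subregister $A_i$, the factors outside $A_i$ contribute $\prod_{l\notin A_i}\tr(\hat\sigma_l)=1$ (each snapshot has unit trace), so $\hat X_i^{(j)}=\tr\!\big(\widetilde O_i\bigotimes_{l\in A_i}\hat\sigma_l\big)$ depends only on $k$ qubits. Each $\hat\sigma_l$ has eigenvalues in $\{2,-1\}$, which controls the relevant norms of the $k$-fold tensor product; combined with $\|\widetilde O_i\|_\infty=\|O_i\|_\infty$, the sharp quantitative input is the second-moment (shadow-norm) bound $\mathbb E[(\hat X_i^{(j)})^2]\le 4^{k}\|O_i\|_\infty^2$. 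This $4^k$ variance proxy is precisely the origin of the factor $4^{k+1}$ in the stated threshold for $T$.

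Finally I would feed this estimate into a Bernstein- or median-of-means-type tail bound for $\frac1T\sum_{j}\hat X_i^{(j)}$, obtaining $\Pr[\,|\tr(O_i\hat\rho_T)-\tr(O_i\rho)|>\epsilon\,]\le 2\exp(-cT\epsilon^2/(4^{k}\|O_i\|_\infty^2))$ for an absolute constant $c$; choosing $T\ge\frac{4^{k+1}}{\epsilon^2}\log(\frac{2M}{\delta})\max_i\|O_i\|_\infty^2$ drives each failure probability below $\delta/M$, and a union bound over $i=1,\dots,M$ delivers the simultaneous guarantee with probability at least $1-\delta$. I expect the main obstacle to be this last quantitative step: establishing the $4^k$ second-moment bound tightly enough and routing it through the right concentration inequality so that the exponent lands on $4^{k+1}$. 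A naive range-based Hoeffding argument using only $\|\hat\sigma_l\|_1=3$ would instead produce a $9^k$ dependence, which is worse than $4^{k+1}$ for $k\ge2$; this is exactly why the variance estimate, rather than a crude worst-case range, is the essential tool. By contrast, unbiasedness and the union bound are routine.
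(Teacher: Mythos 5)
The paper does not actually prove this theorem: it is imported by citation (from Sack et al., ultimately Huang--Kueng--Preskill), so there is no in-paper proof to compare against. Your sketch reconstructs the standard argument from those sources, and its architecture is right: unbiasedness via inversion of the single-qubit Pauli measurement channel tensored across qubits, reduction to the $k$-qubit marginal using $\tr(\hat\sigma_l)=1$ on the qubits outside $A_i$, the second-moment bound $\mathbb{E}[\hat X_i^2]\le 4^k\|O_i\|_\infty^2$, an exponential tail bound, and a union bound over the $M$ observables. Two caveats. First, because the theorem as stated uses the plain empirical mean $\hat\rho_T=\frac1T\sum_j\hat\rho^{(j)}$ rather than a median-of-means estimator, the ``median-of-means-type'' branch of your final step is not actually available to you; you must go through Bernstein, and Bernstein needs not only the variance proxy but also the almost-sure bound $|\hat X_i^{(j)}|\le 3^k\|O_i\|_\infty$ --- which you do derive from the eigenvalues $\{2,-1\}$ of each snapshot, but deploy only in your aside about why Hoeffding is too weak. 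You should route that bound into the Bernstein denominator and check that $3^{k}\epsilon\|O_i\|_\infty/3$ is dominated by $4^k\|O_i\|_\infty^2$ so that the constant lands on $4^{k+1}$; that bookkeeping is where the stated threshold for $T$ actually comes from. Second, the $4^k$ shadow-norm bound itself, which you explicitly defer, is the technical heart of the theorem; as written your text is a correct plan that identifies the right key lemma, but not yet a proof of it.
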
 
    Note that the original version of this theorem required $\| O_i\|_{\infty} \leq 1$ for all $i$. However, this can be relaxed by dividing every matrix by $\max_i \| O_i\|_{\infty}$ and then estimating with precision $\epsilon/\max_i \| O_i\|_{\infty}$.

    Moreover, each estimation $\text{tr}(O_l\hat{\rho}_T)$ can be classically computed very efficiently. Let $A_l = [q_{l_1}, \dots, q_{l_k}]$ be the sub-register that $O_l$ acts non-trivially on and $O_l = \widetilde{O}_l \otimes \mathds1$ with $ \widetilde{O}_l \in \mathcal{L}(\mathbb{C}^{2^k})$. For the shadow $\hat{\rho}$ in Eq.~\eqref{eq:shadow}, we only need to use the $k$ $2\times 2$ matrices corresponding to the sub-register $A_l$. Denote by $\hat{\rho}_T \big|_{A_l}$ the classical shadow obtained by taking average of $T$ such reduced shadows. Then we have $ \text{tr}(O_l\hat{\rho}_T ) = \text{tr}(\widetilde{O}_l \hat{\rho}_T \big|_{A_l} )$ and hence it can be computed with cost exponential only in $k$ and independent of $n$.

    \subsection{Variational quantum algorithms} \label{subsec:vqa}

A VQA encodes the task under consideration as a parameterized quantum circuit (PQC), which is typically composed of multiple layers of ansatzes, i.e., smaller PQCs with the same architecture. Write $U(\boldsymbol{\theta})$ for the PQC, where $\boldsymbol{\theta}$ is a real-valued vector of parameters. The VQA uses $U(\boldsymbol{\theta})$ to estimate a target function's value and gradient at some point in its domain, and then optimizes the parameters of the PQC by feeding the circuit's output to a classical optimizer.

In this paper, we focus on the basic function as specified in Eq.~\eqref{eq:vqa-function} and aim to find the parameters which maximize it. Note that $f_{\rho, O}(\boldsymbol{\theta})$ can be estimated through multiple measurements. Using techniques such as the parameter shift rule~\cite{Mitarai2018}, finite differences, etc, one can also estimate the gradient of $f_{\rho, O}$ using a quantum computer. With this, we can find the optimum values of $\boldsymbol{\theta}$ by using any classical optimization method. Many tasks such as variational quantum eigensolver, finding state preparation circuits, quantum autoencoder, etc can be reduced to finding the best $\boldsymbol{\theta}$ which maximizes some $f_{\rho, O}(\boldsymbol{\theta})$.

\section{Alternating Layered Shadow Optimization} \setlabel{Alternating Layered Shadow Optimization}{sec:also} 

    In this section, we explain the key idea and theoretical results behind ALSO. Following \cite{Cerezo2021}, we request the observables to be $k$-local or linear combinations of a small number (polynomially dependent on $n$) of $k$-local observables for some $k\ll n$.

    \subsection{Alternating Layered Ansatz} 
    \setlabel{Alternating Layered Ansatz}{subsec:ala} 
    
        The Alternating Layered Ansatz is the brick-like circuit structure presented in Fig.~\ref{fig:ala}, where each  $S(\boldsymbol{\theta}_{ij})$ is a parameterized circuit acting on a small number of qubits. A simple example of $S$ is given in Fig.~\ref{subfig:s}. This work assumes that each  $S$ acts on two qubits and has $p$ real parameters, but our idea can be easily extended to the general case.
        In Fig.~\ref{fig:ala}, the total number of vertical blocks of $S$ gates, written  $d$, is called the \textit{depth} of the ansatz. The circuit depicted in the figure has $d=3$. 
        
        Let $ \oplus$ denote addition modulo $n$. Then, in a specific vertical block $j$, each circuit $S(\boldsymbol{\theta}_{ij})$ acts on qubits $2(i-1) \oplus j$ and its neighbor $2(i-1) \oplus j \oplus 1$. So, the final form of the circuit is given as 
        \begin{align}\label{eq:ala}
           \hspace*{-2mm} U(\boldsymbol{\theta}) = \prod \limits_{j = 1} ^ {d} \prod \limits_{i = 1} ^ {n/2}  S(\boldsymbol{\theta}_{ij})[2(i-1) \oplus j, 2(i-1) \oplus j \oplus 1],
        \end{align}
        where $\boldsymbol{\theta} \in \mathbb{R} ^ {\frac{n}{2} \times d \times p}$ is a tensor of real parameters where $\boldsymbol{\theta}_{ij}$ is a $p$-dimensional real vector of parameters and $S(\boldsymbol{\theta}_{ij})[k,l]$  means $S(\boldsymbol{\theta}_{ij})$ acting on qubits $k$ and $l$. 
    
    \begin{figure} 
    	\begin{center} 
    		\includegraphics[width=\columnwidth, page=1]{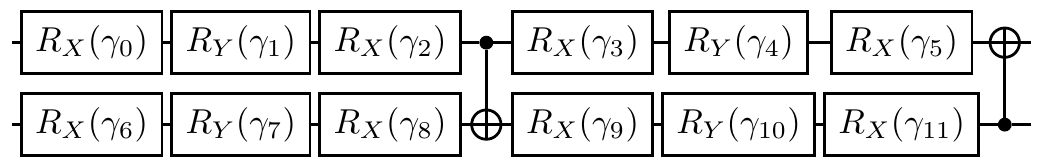}
    		\caption{The structure of $S(\boldsymbol{\gamma})$ used in the simulation. The two-qubit gate used here is the CNOT gate.
    		}
    		\label{subfig:s}
    	\end{center}
    \end{figure}
    
    \subsection{Method}
    We first explain our approach in a simpler model, with $1$-local observables and alternating layered ansatzes built with $2$-local circuits, and then extend the results to circuits and observables with arbitrary localities.

    We start with a lemma that forms the backbone of ALSO. 

\begin{figure}[t]
    \centering
    \includegraphics[width=\columnwidth]{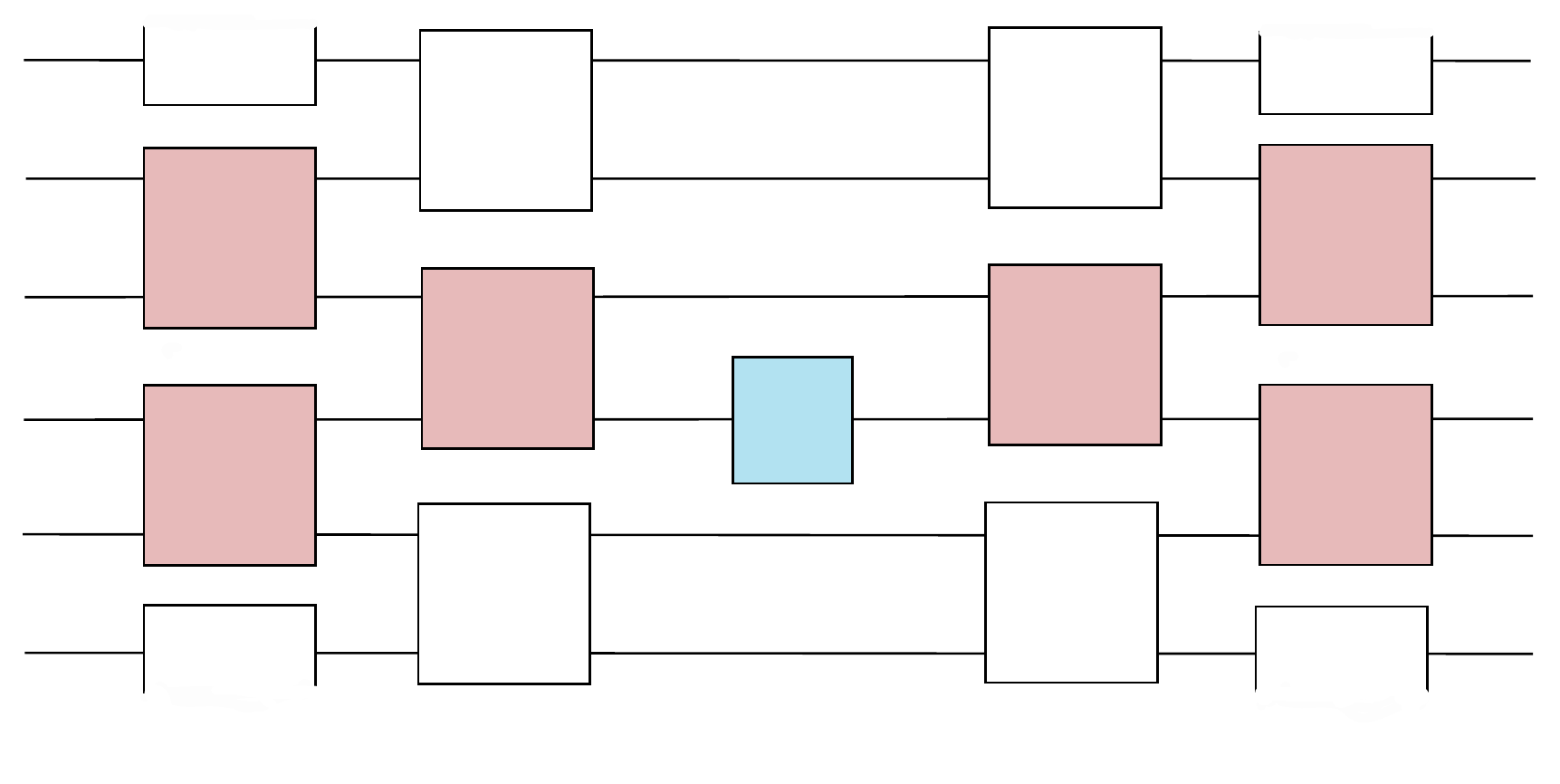}
    \caption{The structure of $W_O(\boldsymbol{\theta}) = U(\boldsymbol{\theta}) ^ {\dag} O U (\boldsymbol{\theta})$ where the blue box is a $1$-local observable, and all other boxes are $S$ sub-circuits. Except for the red ones, all other sub-circuits cancel each other out, resulting in $ W_O(\boldsymbol{\theta})$ being $2d$-local.}
    \label{fig:w_theta}
\end{figure}

    \begin{lemma} \label{le:also}
Let $d, S $ and $U$ be defined as in Eq.~\eqref{eq:ala}, and $\boldsymbol{\theta} \in \mathbb{R} ^ {\frac{n}{2} \times d \times p}$. For any $n$-qubit $1$-local observable $O$, define $W_O(\boldsymbol{\theta}) = U(\boldsymbol{\theta}) ^ {\dag} O U(\boldsymbol{\theta})$. Then we have  $\|W_O(\boldsymbol{\theta})\|_{\infty} = \|O \| _{\infty}$ and $W_O(\boldsymbol{\theta})$ is $2d$-local, that is,
$W_O(\boldsymbol{\theta}) = \widetilde{W}_{O}(\boldsymbol{\theta})[A]$ for some sub-register $A$ of $2d$ qubits. 

    \end{lemma}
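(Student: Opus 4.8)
The plan is to treat the two assertions separately, since only the second uses the structure of the ansatz. For the norm identity, note that $U(\boldsymbol{\theta})$ is unitary, so $W_O(\boldsymbol{\theta}) = U(\boldsymbol{\theta})^{\dag} O U(\boldsymbol{\theta})$ is a unitary conjugate of $O$ and therefore has the same spectrum as $O$. Since the operator norm $\|\cdot\|_{\infty}$ is unitarily invariant, $\|W_O(\boldsymbol{\theta})\|_{\infty} = \|O\|_{\infty}$ follows immediately, with no reference to the alternating structure.

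For the locality claim I would first regroup the product in Eq.~\eqref{eq:ala} by vertical block, writing $U(\boldsymbol{\theta}) = L_d \cdots L_1$ where $L_j = \prod_{i=1}^{n/2} S(\boldsymbol{\theta}_{ij})[2(i-1)\oplus j, 2(i-1)\oplus j \oplus 1]$ collects the gates of block $j$. Within a single block the gates act on disjoint pairs of qubits, so they commute and each qubit lies in exactly one gate of $L_j$. Then $W_O(\boldsymbol{\theta}) = L_1^{\dag}\cdots L_d^{\dag}\, O\, L_d \cdots L_1$, and I would analyze how the support of $O$ grows as it is conjugated one block at a time, starting from the block adjacent to $O$ and proceeding outward; this is exactly the back-evolution of the observable depicted in Fig.~\ref{fig:w_theta}.

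The heart of the argument is a support-growth bound. Let $A_m$ be the set of qubits on which the operator acts non-trivially after conjugating $O$ by $m$ blocks, so $A_0 = \{q\}$ where $q$ is the single qubit supporting $O$. I would establish two facts: (i) $|A_1| \le 2$, because $q$ belongs to exactly one two-qubit gate of the adjacent block and only that gate fails to commute through, extending the support to $q$ and its single partner; and (ii) if $A_m$ is a contiguous arc of qubits on the ring induced by $\oplus$, then $A_{m+1}$ is again a contiguous arc with $|A_{m+1}| \le |A_m| + 2$, since every gate of the next block meeting $A_m$ covers a pair of adjacent qubits, these gates together cover $A_m$, and the support can extend by at most one qubit beyond each of the two ends. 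Combining (i) and (ii) gives $|A_d| \le 2 + 2(d-1) = 2d$, and padding $A_d$ to a sub-register $A$ of exactly $2d$ qubits yields the claimed factorization $W_O(\boldsymbol{\theta}) = \widetilde{W}_O(\boldsymbol{\theta})[A]$.

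The main obstacle is making step (ii) rigorous, in particular verifying that the support remains a single contiguous arc so that only its two endpoints can spawn new qubits. This rests on the fact, read off from the index formula $2(i-1)\oplus j$ and $2(i-1)\oplus j\oplus 1$, that every gate acts on a pair of \emph{adjacent} qubits: the gates meeting the current arc then cover a contiguous region extended by at most one site at each end, and contiguity is preserved across blocks. The sharpening from the naive bound $2d+1$ down to $2d$ comes entirely from the first conjugation in (i) contributing only $+1$, since a single qubit sits in just one gate of a block. Finally, wrap-around effects from the modular arithmetic do not affect the counting in the shallow regime $2d \le n$ that is of interest here.
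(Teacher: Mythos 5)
Your proposal is correct and follows essentially the same route as the paper: unitary invariance of the spectrum for the norm claim, and a layer-by-layer account of how the support of the back-evolved observable grows (by $1$ qubit at the first block and by at most $2$ at each subsequent block, giving $2+2(d-1)=2d$). The paper simply reads this induction off Fig.~\ref{fig:w_theta} rather than spelling out the contiguous-arc invariant, so your write-up is a more rigorous rendering of the same argument rather than a different one.
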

    \begin{proof}
        For any $\boldsymbol{\theta}$, $ W_O(\boldsymbol{\theta})$ is obtained by conjugating $O$ with a unitary matrix. This means that the eigenvalues of $O$ and $W_O(\boldsymbol{\theta})$ are the same. So, $\|W_O(\boldsymbol{\theta})\|_{\infty} = \| O\|_{\infty}$. Figure~\ref{fig:w_theta} shows the structure of $W_O(\boldsymbol{\theta})$. 
If $d = 1$, then $W_O(\boldsymbol{\theta})$ will be an observable with locality $2$ as all blocks of the parameterized circuit except the ones acting on the qubit where the observable acts on will cancel out. Similarly, if $d = 2$, then $W_O(\boldsymbol{\theta})$ will have locality $4$. For each increment in $d$, the locality of $W_O(\boldsymbol{\theta})$ increases by $2$. So, the locality of $W_O(\boldsymbol{\theta})$ is $2d$.      
    \end{proof}

Moreover, from Figure~\ref{fig:w_theta}, we can see that for any $ \boldsymbol{\theta}$, $\widetilde{W}_{O_i}(\boldsymbol{\theta})$ can be computed with cost exponential only in $d$ using tensor contractions of the observable with all the $S$ gates marked in red.

Let $U(\boldsymbol{\theta})$ be the alternating layered ansatz defined  as in Eq.~\eqref{eq:ala}, and $O = \sum_{i = 1} ^ {M} O_i$ be the observable, where each $O_i$ is $1$-local. 
Assume that we are using an iterative optimization algorithm, one that takes as input a target function and outputs its optimizer, to find the maximizer of Eq.~\eqref{eq:vqa-function}, and the whole optimization procedure requires $C$ function evaluations of the form $f_{\rho, O}(\boldsymbol{\theta}^{(1)}),f_{\rho, O}(\boldsymbol{\theta}^{(2)}), \dots, f_{\rho, O}(\boldsymbol{\theta}^{(C)})$. Lemma~\ref{le:also} says that each function evaluation can be seen as estimating the expectation of $2d$-local observables, because $f_{\rho, O}(\boldsymbol{\theta}) =  \sum_i \text{tr}( O_i U(\boldsymbol{\theta}) \rho U(\boldsymbol{\theta}) ^ {\dag}) = \sum_i \text{tr}( W_{O_i}(\boldsymbol{\theta}) \rho )$. 

Now using Theorem~\ref{th:shadows}, we can estimate all the $C$ function evaluations and the whole ALSO algorithm goes as follows:

\begin{enumerate}
    \item Load $T=\mathcal{O}(\log (C) \cdot \text{poly}(n))$ classical shadows of $\rho$. Let $\hat{\rho}_T$ be the shadow state (cf.Theorem~\ref{th:shadows}). 

    \item For all $i$, compute $ \hat{\rho}_T \big|_{A_i}$, where $A_i$ is the sub-register that $ W_{O_i}$ acts non-trivially on.
    \item Use the iterative optimization algorithm to optimize the target function $\hat{f}_{\rho,O}(\boldsymbol{\theta})= \sum_{i}\text{tr}(\widetilde{W}_{O_i}(\boldsymbol{\theta}) \hat{\rho_T} \big|_{A_i} )$.
    
\end{enumerate}

Note that the cost of classical computation is dominated by the computation of $ \sum_{i}\text{tr}(\widetilde{W}_{O_i}(\boldsymbol{\theta}) \hat{\rho_T} \big|_{A_i} )$ and so it scales exponentially only in $d$. Hence, when $d=\mathcal{O}(\log n)$, the classical computational cost scales polynomially on $n$.

    \subsection{Sample complexity}
    In this section, we discuss the sampling complexity of the protocol, that is, the range of values of $T$ that guarantee good estimations of all the function evaluations. We show that when $d = \mathcal{O}(\log n)$, the sample complexity is $\mathcal{O}(\log (C) \cdot \text{poly}(n))$.
    
    \begin{theorem} \label{th:also}
     Let $d, S $ and $U$ be defined as in Eq.~\eqref{eq:ala}. Suppose $\rho$ is an arbitrary $n$-qubit state and $O = \sum_{i = 1} ^ {M} O_i$, where each $O_i$ is an $n$-qubit $1$-local observable.
     Then, for any $\delta, \epsilon \in (0, 1)$ and any $C$ parameter tensors $\boldsymbol{\theta} ^ {(1)}, \boldsymbol{\theta} ^ {(2)}, \ldots, \boldsymbol{\theta} ^ {(C)}$, all values $f_{\rho, O}(\boldsymbol{\theta} ^ {(c)})$ can be estimated using $\hat{f}_{\rho, O}(\boldsymbol{\theta} ^ {(c)}):=\text{tr}( W_O(\boldsymbol{\theta} ^ {(c)}) \hat{\rho}_T)$ 
     with the guarantee 
          \begin{align} \label{eq:approximation_criterion}
            \text{Prob}\left( \bigcap \limits_{c = 1} ^ C \left[ \left| f_{\rho, O}\big(\boldsymbol{\theta} ^ {(c)}\big)\! -\! \hat{f}_{\rho, O}\big(\boldsymbol{\theta} ^ {(c)}\big) \right|  \leq \epsilon \right] \right) \geq 1 - \delta\!
            \end{align}
            where $$ T \geq M ^ 2 \log \left( {\frac{2MC}{\delta}} \right)\cdot \frac{4 ^ {2d + 1}} {\epsilon ^ 2} \max_{i} \| O_i\|_{\infty}^2.$$
    \end{theorem}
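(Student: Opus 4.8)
The plan is to reduce the statement to a single invocation of Theorem~\ref{th:shadows} over the collection of all $CM$ conjugated observables, after which the triangle inequality and the built-in union bound deliver the claimed joint guarantee. First I would rewrite each function evaluation in terms of local observables. Since $f_{\rho,O}(\boldsymbol{\theta}) = \text{tr}(U(\boldsymbol{\theta})^{\dagger} O\, U(\boldsymbol{\theta})\rho) = \text{tr}(W_O(\boldsymbol{\theta})\rho)$ and, by linearity of conjugation, $W_O(\boldsymbol{\theta}) = \sum_{i=1}^{M} W_{O_i}(\boldsymbol{\theta})$, both the true value and its estimator split as $f_{\rho,O}(\boldsymbol{\theta}^{(c)}) = \sum_i \text{tr}(W_{O_i}(\boldsymbol{\theta}^{(c)})\rho)$ and $\hat{f}_{\rho,O}(\boldsymbol{\theta}^{(c)}) = \sum_i \text{tr}(W_{O_i}(\boldsymbol{\theta}^{(c)})\hat{\rho}_T)$. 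By Lemma~\ref{le:also}, each $W_{O_i}(\boldsymbol{\theta}^{(c)})$ is a $2d$-local observable with $\|W_{O_i}(\boldsymbol{\theta}^{(c)})\|_{\infty} = \|O_i\|_{\infty}$, uniformly in $c$.

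Next I would bound each function-evaluation error by the individual observable errors via the triangle inequality,
\[
\left| f_{\rho,O}(\boldsymbol{\theta}^{(c)}) - \hat{f}_{\rho,O}(\boldsymbol{\theta}^{(c)}) \right| \leq \sum_{i=1}^{M} \left| \text{tr}(W_{O_i}(\boldsymbol{\theta}^{(c)})\rho) - \text{tr}(W_{O_i}(\boldsymbol{\theta}^{(c)})\hat{\rho}_T) \right|,
\]
so it suffices to control each of the $CM$ terms to precision $\epsilon/M$: if every individual estimation error is at most $\epsilon/M$, then each of the $C$ sums is at most $\epsilon$, which is exactly the event appearing in Eq.~\eqref{eq:approximation_criterion}.

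Finally, I would apply Theorem~\ref{th:shadows} to the family of $CM$ observables $\{W_{O_i}(\boldsymbol{\theta}^{(c)})\}$, all of which are $2d$-local (take $k=2d$), with target precision $\epsilon/M$, failure probability $\delta$, and $\max_{i,c}\|W_{O_i}(\boldsymbol{\theta}^{(c)})\|_{\infty} = \max_i \|O_i\|_{\infty}$. Substituting $M' = CM$, $k = 2d$, and $\epsilon' = \epsilon/M$ into the sample-complexity bound of Theorem~\ref{th:shadows} gives
\[
T \geq \frac{4^{2d+1}}{(\epsilon/M)^2}\log\!\left(\frac{2CM}{\delta}\right)\max_i \|O_i\|_{\infty}^2 = M^2 \log\!\left(\frac{2MC}{\delta}\right)\cdot \frac{4^{2d+1}}{\epsilon^2}\max_i\|O_i\|_{\infty}^2,
\]
which is precisely the stated threshold. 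Theorem~\ref{th:shadows} then guarantees, with probability at least $1-\delta$, that all $CM$ individual errors are simultaneously at most $\epsilon/M$; combined with the triangle-inequality step, this forces all $C$ events in Eq.~\eqref{eq:approximation_criterion} to hold simultaneously.

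I do not expect a genuine obstacle here; the only bookkeeping to get right is the precision/union-bound accounting. The estimator of a single $f$-value is a sum of $M$ local-observable estimators, so demanding $\epsilon$-accuracy on each of the $C$ sums forces $\epsilon/M$-accuracy on each summand, which is the source of the $M^2$ prefactor in $T$; and all $CM$ summands must be estimated from the \emph{same} shadow state $\hat{\rho}_T$, which is exactly what the union bound over $M'=CM$ observables built into Theorem~\ref{th:shadows} handles. The substantive input is the uniformity supplied by Lemma~\ref{le:also}: because the locality ($2d$) and operator norm ($\|O_i\|_{\infty}$) of $W_{O_i}(\boldsymbol{\theta}^{(c)})$ are independent of $c$, one invocation of Theorem~\ref{th:shadows} covers the whole collection.
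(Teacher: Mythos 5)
Your proposal is correct and follows essentially the same route as the paper's own proof: decompose $W_O(\boldsymbol{\theta}^{(c)})=\sum_i W_{O_i}(\boldsymbol{\theta}^{(c)})$ into $CM$ observables that are $2d$-local with unchanged operator norm (Lemma~\ref{le:also}), invoke Theorem~\ref{th:shadows} once with precision $\epsilon/M$ over all $CM$ observables, and finish with the triangle inequality. The only cosmetic difference is that the paper warms up with the $M=1$ case before the general argument, which you skip without loss.
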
             
\begin{proof}
          Note that for any $\boldsymbol{\theta} \in \mathbb{R}^{\frac{n}{2} \times d \times p}$, we have $f_{\rho,O}(\boldsymbol{\theta}) = \text{tr}(O U(\boldsymbol{\theta}) \rho U(\boldsymbol{\theta})^{\dag}) = \text{tr}(W_O(\boldsymbol{\theta}) \rho)$. 

        First consider the case when $M = 1$. From Lemma~$\ref{le:also}$, we know that $W_O(\boldsymbol{\theta})$ is a $2d$-local operator, with $\|W_O(\boldsymbol{\theta})\|_{\infty} = \| O\|_{\infty}$. This means that all the function evaluations can be seen as computing expectations of observables with locality $2d$ and $\| \cdot \|_{\infty}$ the same as $O$.

        The function evaluations that we have to approximate are $ f_{\rho, O}(\boldsymbol{\theta} ^ {(1)}), f_{\rho, O}(\boldsymbol{\theta} ^ {(2)}), \dots, f_{\rho, O}(\boldsymbol{\theta} ^ {(C)})$ and hence the observables whose expectation that we have to find are $ W_O(\boldsymbol{\theta}^{(1)}), W_O(\boldsymbol{\theta}^{(2)}), \dots, W_O(\boldsymbol{\theta}^{(C)})$. 
        Then from Theorem~\ref{th:shadows}, by using 
        \begin{align}
            T \geq \log \left( {\frac{2C}{\delta}} \right)\cdot \frac{4 ^ {2d + 1}} {\epsilon ^ 2} \| O\|_{\infty}^2
        \end{align}
        classical shadows of $\rho$, we can estimate the expectation of all of these observables to precision $\epsilon$ with a probability of at least $1 - \delta$. 
        
        When $M > 1$, we no longer have $O$ being necessarily a $1$-local observable. So, for every $c$ we have to estimate the expectation of $\rho$ with the observables $ W_{O_i}(\mathbf{\theta} ^ {(c)}) = U(\boldsymbol{\theta} ^ {(c)}) ^ {\dag} O_i U(\boldsymbol{\theta} ^ {(c)})$ for all $i$, and compute their sum. Hence, the total number of observables that we have to compute expectations with is now $MC$. 
        
        By using
        \begin{align}
            T \geq \log \left( {\frac{2MC}{\delta}} \right)\cdot \frac{4 ^ {2d + 1} } {(\epsilon/M) ^ 2} \max_{i} \| O_i\|_{\infty}^2
        \end{align}
        classical shadows, we will have approximations of $f_{\rho, O_i}(\boldsymbol{\theta} ^ {(c)})$ given as $ \hat{f}_{\rho,O_i}(\boldsymbol{\theta} ^ {(c)}) = \text{tr}(W_{O_i}(\boldsymbol{\theta}) \hat{\rho}_T)$ such that $ \text{Prob}(E)\geq 1 - \delta$ for 
        \begin{align}
            E=\bigcap \limits_{c = 1} ^ C \bigcap \limits_{i = 1} ^ M \left[ \left| f_{\rho, O_i}(\boldsymbol{\theta} ^ {(c)}) - \hat{f}_{\rho, O_i}(\boldsymbol{\theta} ^ {(c)}) \right| \leq \epsilon/M \right].
        \end{align}
        This is because, if we consider the difference $\left| f_{\rho, O_i}(\boldsymbol{\theta} ^ {(c)}) - \hat{f}_{\rho, O_i}(\boldsymbol{\theta} ^ {(c)}) \right| $ being less than or equal to $\epsilon/M$ to be an event, then Theorem~\ref{th:shadows} says that the intersection of these events for all $i,c$ occurs with probability at least $1-\delta$.
        
        Then, with probability at least $1-\delta$, for all $c$ we have 

        \[
            \sum \limits_{i=1}^M \left| f_{\rho, O_i}(\boldsymbol{\theta} ^ {(c)}) - \hat{f}_{\rho, O_i}(\boldsymbol{\theta} ^ {(c)}) \right| \leq \epsilon
        \]
        and thus
        \begin{align*}
            &\ \left| f_{\rho, O}(\boldsymbol{\theta} ^ {(c)}) - \hat{f}_{\rho, O}(\boldsymbol{\theta} ^ {(c)})\right|\\
            = &\ \left| \sum \limits_{i=1}^M \left[ f_{\rho, O_i}(\boldsymbol{\theta} ^ {(c)}) - \hat{f}_{\rho, O_i}(\boldsymbol{\theta} ^ {(c)})\right] \right|\\
            \leq &\ \epsilon 
        \end{align*}
        where $\hat{f}_{\rho, O}(\boldsymbol{\theta} ^ {(c)}) = \sum_{i = 1} ^ M \hat{f}_{\rho, O_i}(\boldsymbol{\theta} ^ {(c)}) $. So all approximations $\hat{f}_{\rho, O}(\boldsymbol{\theta} ^ {(1)}), \hat{f}_{\rho, O}(\boldsymbol{\theta} ^ {(2)}), \dots, \hat{f}_{\rho, O}(\boldsymbol{\theta} ^ {(C)})$ satisfy Eq~\eqref{eq:approximation_criterion}. Furthermore, since each single-copy classical shadow requires only $1$ copy of the input state $\rho$, we can estimate all the function evaluations to the required quality with $T$ copies of $\rho$. 
\end{proof}

    This is remarkable as, without using classical shadows, we may need to estimate $f_{\rho, O}(\boldsymbol{\theta})$ for any parameter tensor $\boldsymbol{\theta}$ through measurements. Suppose $K$ copies of $\rho$ are consumed to estimate each of these values. Then we end up consuming $ CK$ copies of $\rho$, which can be exponentially larger than the number consumed by ALSO. 
    
    In our method, the measurements that we have to make are solely for computing the classical shadows and hence are independent of all $ \boldsymbol{\theta}^{(c)}$. Moreover, each measurement outcome can be reused multiple times thus resulting in a classical memory for efficiently storing the quantum states, for use in alternating layered VQAs. In the standard method of training VQAs, we are not able to reuse the measurement outcomes that are made as part of the optimization, because each measurement outcome is dependent on the input parameter $\boldsymbol{\theta}^{(c)}$. This is a crucial reason as to why ALSO is a much more appealing option to optimize these functions, especially from a practical perspective where one has to do hyperparameter tuning, find the right classical optimizer, etc.

    One important point to note is that even though the constants look large, in practice, we need not necessarily require this many copies (classical shadows) of $\rho$. This will be shortly illustrated in~\ref{sec:simulation_results},  where we are able to match the results of ideal VQA simulations (simulations that use infinite copies of the input state $\rho$) by using a number of copies of $\rho$ orders of magnitude fewer than the number suggested by Theorem~\ref{th:also}.

    \begin{figure}[htb] 

        \centering
        \begin{tabular}{c}
        \includegraphics[width=\columnwidth]{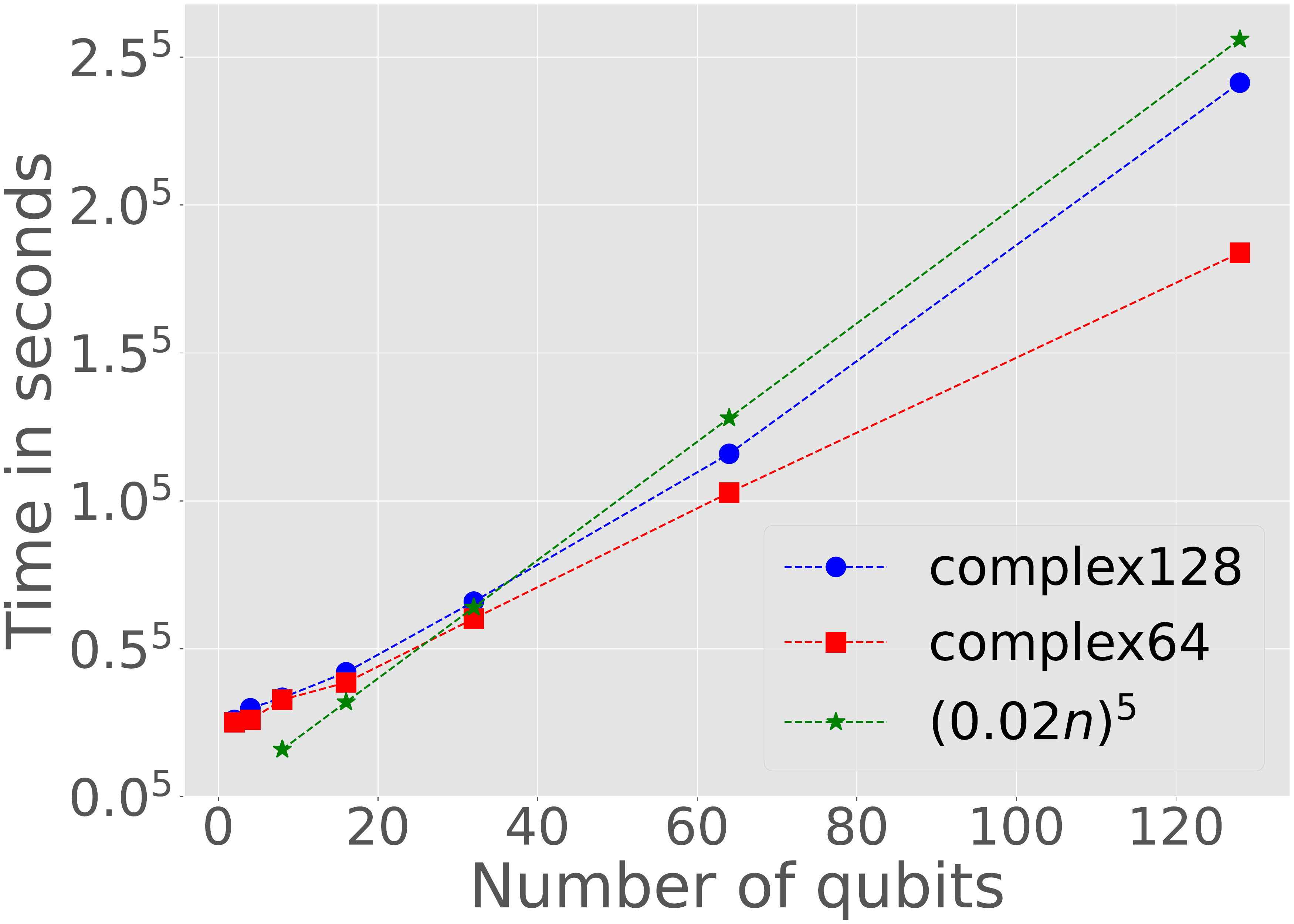}
        \end{tabular}
        \caption{
        Plot showing the time (in seconds) taken for a single function evaluation using ALSO. Here, $d = \lfloor \log n \rfloor$, $S$ is a $2$-qubit parameterized sub-circuit and $O$ is a $1$-qubit observable. Along with the execution times, we plot the function $(0.02n)^5$ to highlight the polynomial dependence of time on the number of qubits. 
        } 
        \label{fig:time_for_one_eval}
    \end{figure}

    The space complexity of the protocol is dominated by the storage of the matrices $\hat{\rho}_T \big|_{A_i}$. Since the dimension of each of these matrices is $4^d$, we need at most $16^dM$ complex numbers to store all of them. For example, let $n, M=50$, $d=5$. Then we see that if we are using $128$ bits to store each complex number, then we only require $838$MB to store all matrices $ \hat{\rho}_{T} \big|_{A_i}$. Time taken for single function evaluations is plotted in Figure~\ref{fig:time_for_one_eval}. On the $x$-axis, we have the number of qubits, and on the $y$-axis, we have the time (in seconds) taken to compute a single function evaluation, averaged over $5$ cases. In each case, $d = \lfloor \log n \rfloor$ and the observable $O$ is a $1$-local observable, with $S$ being the circuit in Figure~\ref{subfig:s}. We plot the results for both `complex$128$' and `complex$64$' being used as datatypes in Python.
    As expected, the plots reflects a polynomial dependence for computational time on the number of qubits.The simulation was carried out on a laptop with $16$GB RAM and $2.6$GHz Intel i$7$ processor. 
    
    One can easily generalize Theorem~\ref{th:also} for arbitrarily local observables and circuits. In a similar setting, if we use $k_0$-local parameterized circuits and an observable that is a sum of $k_1$-local observables, then we can carry out an iterative optimization algorithm with all function evaluations satisfying Eq.~\eqref{eq:approximation_criterion} using \[T \geq \frac{M ^ 2}{\epsilon ^ 2} \cdot \log \left( {\frac{2MC}{\delta}} \right)\cdot 4 ^ {k_1 + (2k_0-2)d - 1}\cdot  \max_{i} \| O_i\|_{\infty}^2 \] copies of the input state. This is because for each increment in depth, the locality of $W_{O_i}(\theta)$ increases by at most $2k_0-2$, starting from $k_1$.

\section{Applications}
\setlabel{Applications}{sec:applications} 

    Here we discuss two practical applications of ALSO in the field of quantum information. For each of these applications, the advantage of using ALSO over the standard VQA will be demonstrated in the next section.

    \subsection{The state preparation problem}
\setlabel{state preparation problem}{sec:spp} 

        Let $\rho = \ket{\psi} \bra{\psi} \in \mathcal{L}(\mathbb{C} ^ {2 ^ n})$ be an $n$-qubit pure quantum state. The state preparation problem of $\rho$ intends to find the parameters of an alternating layered ansatz which best approximates (heuristically) the state $\ket{\psi}$.
        That is, we would like to find a parameter tensor such that $1-\left|\bra{\psi} U(\boldsymbol{\theta}) ^ {\dag} \ket{\z} \right| ^ 2$ is minimized over all $\boldsymbol{\theta}$, where $\ket{\z} \in \mathbb{C}^{2^n}$. This quantity is called the \textit{infidelity} between the states $U(\theta) ^ {\dag}\ket{\z}$ and $\ket{\psi}$. Infidelity measures how ``different'' two states are and admits values in the range $[0,1]$, with an infidelity of $0$ implying that $U(\boldsymbol{\theta}) ^ {\dag}\ket{\z} = \ket{\psi}$. 

        Hence, the problem can be framed as an objective function as $\min_{\boldsymbol{\theta}} \big(1-f_{\ket{\psi} \bra{\psi}, \ket{\z} \bra{\z}}(\boldsymbol{\theta}) \big)$ (or equivalently, $ 1-\max_{\boldsymbol{\theta}} f_{\ket{\psi} \bra{\psi}, \ket{0} \bra{0}}(\boldsymbol{\theta})$), 
        with $\ket{\z} \bra{\z}$  being the observable and $\boldsymbol{\theta} \in \mathbb{R} ^ {\frac{n}{2} \times d \times p}$. However, since $\ket{\z}\bra{\z}$ acts on all $n$-qubits, it does not fit into our framework. Motivated by~\cite{Cerezo2021}, we take $\min_{\boldsymbol{\theta}} (1-f_{\ket{\psi} \bra{\psi}, J}(\boldsymbol{\theta}))$ as the objective function, where $J = \frac{1}{n}\sum_{i = 1} ^ n  \ket{0}_i \bra{0} $ and $\ket{0}_i \bra{0}$ is a $1$-local operator that applies $\ket{0} \bra{0} \in \mathcal{L}(\mathbb{C}^2)$ on the $i^{\text{th}}$ qubit only and $\mathds1$ on all other qubits. In this case, the observable $J$ is a sum of $1$-local observables $O_i=\frac{1}{n}\ket{0}_i \bra{0} $. Hence, this problem fits in our framework and we can use ALSO to optimize it.

    \begin{figure*}[tbh] 
    \centering
    \begin{tabular}{ccc}
         \includegraphics[width=0.65\columnwidth]{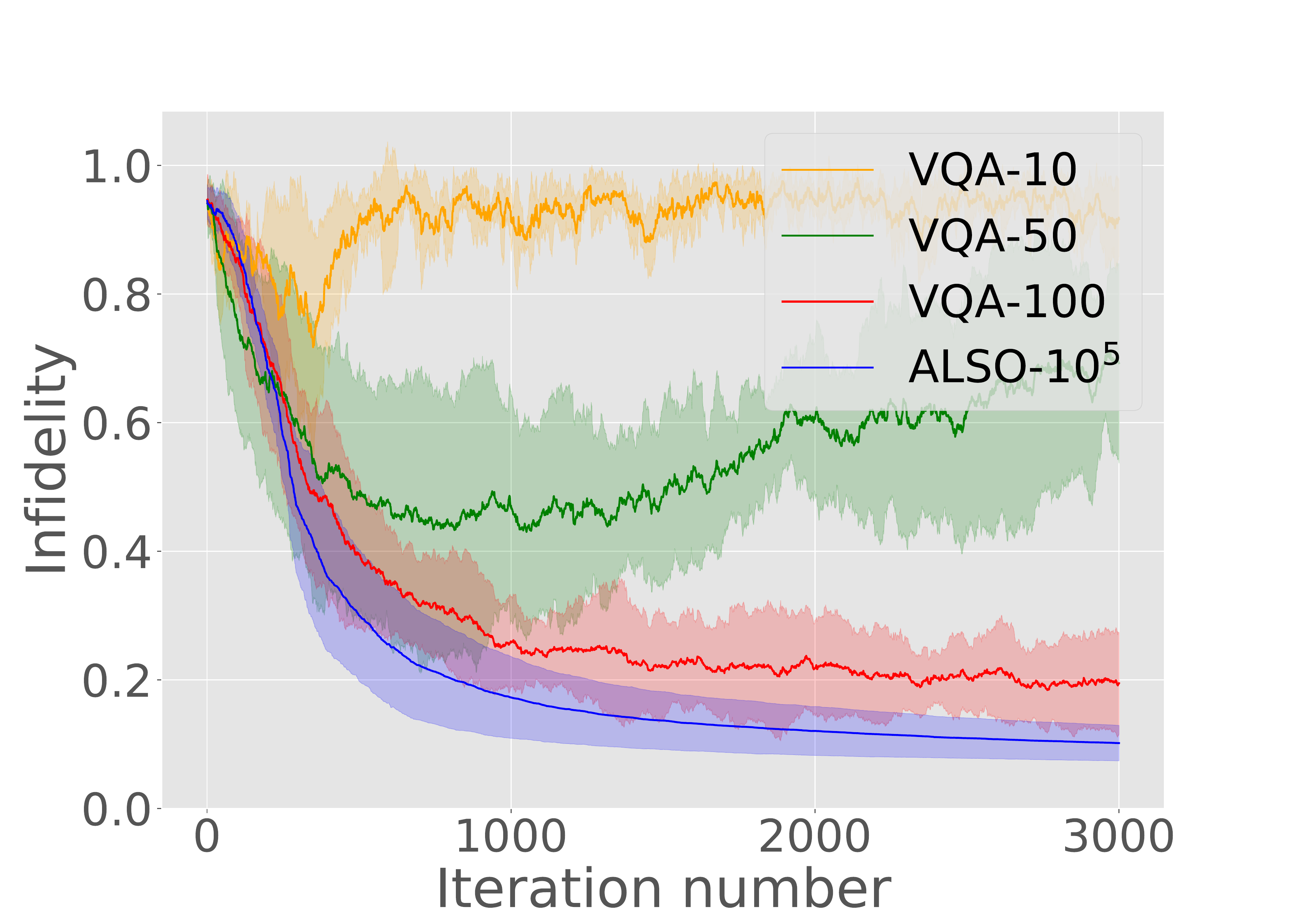} 
         &
        \includegraphics[width=0.65\columnwidth]{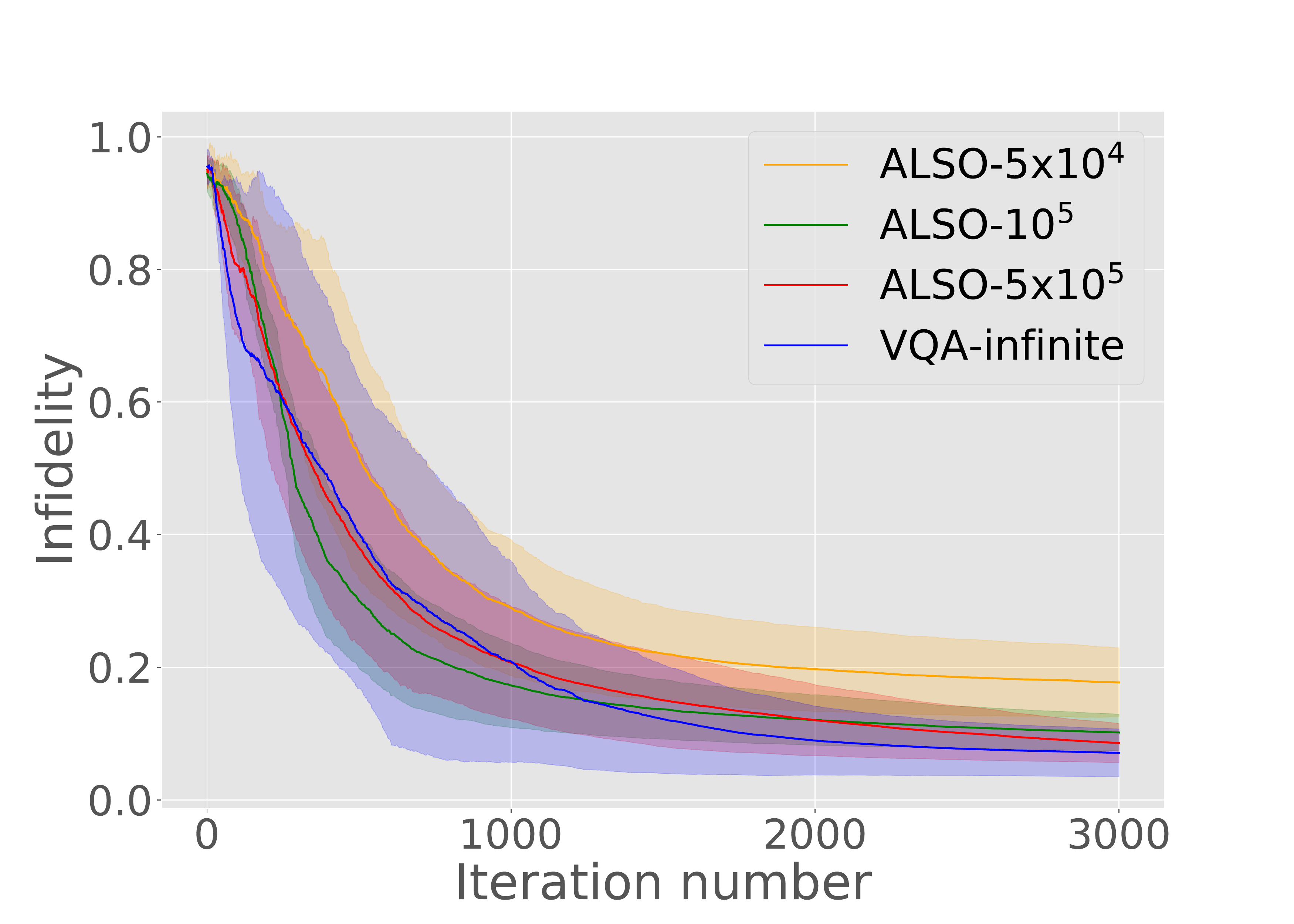}
        &
        \includegraphics[width=0.65\columnwidth]{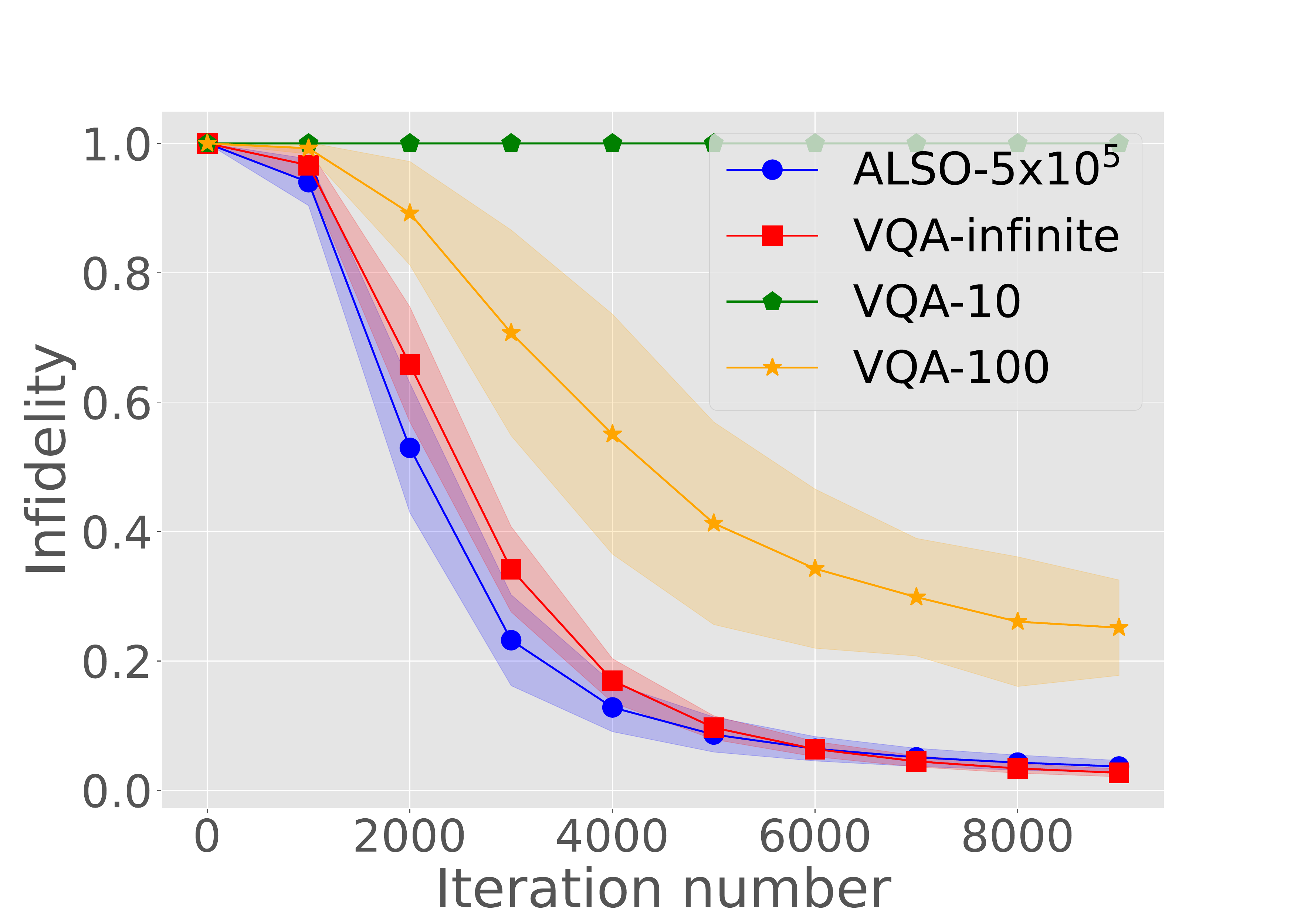} \\
        (a) & (b) & (c) \\ 
         \includegraphics[width=0.65\columnwidth]{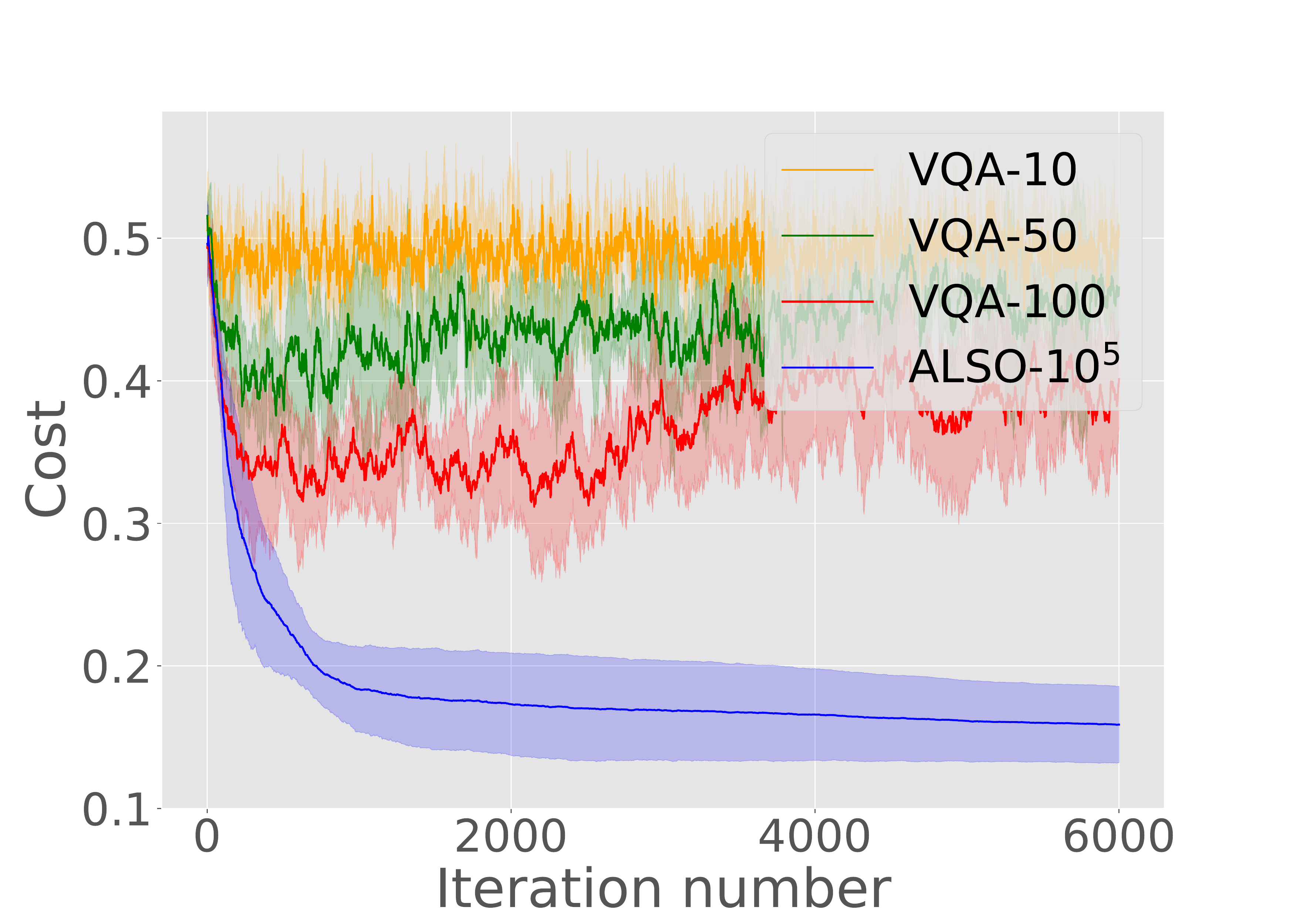} 
         &
         \includegraphics[width=0.65\columnwidth]{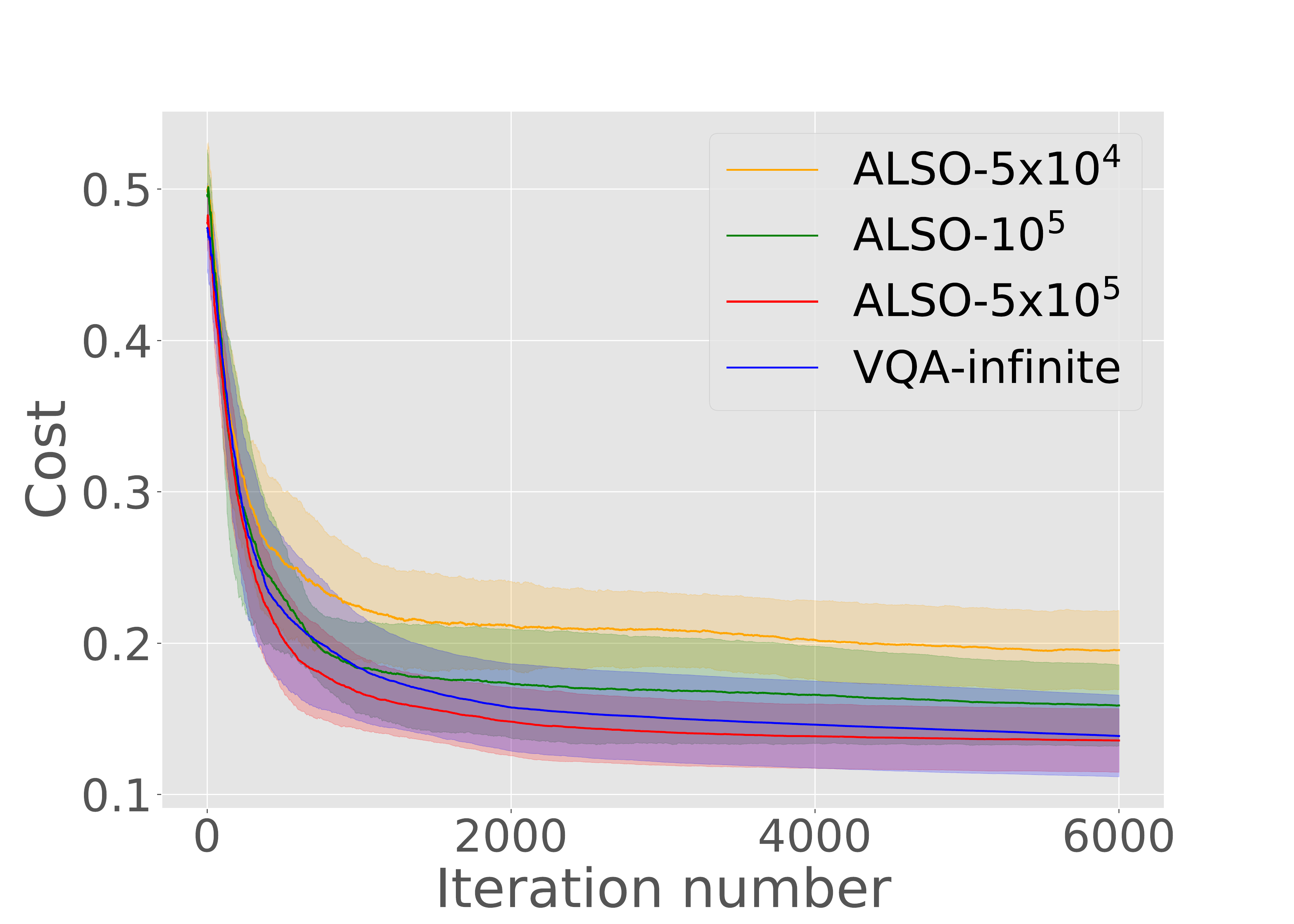}
         &
         \includegraphics[width=0.65\columnwidth]{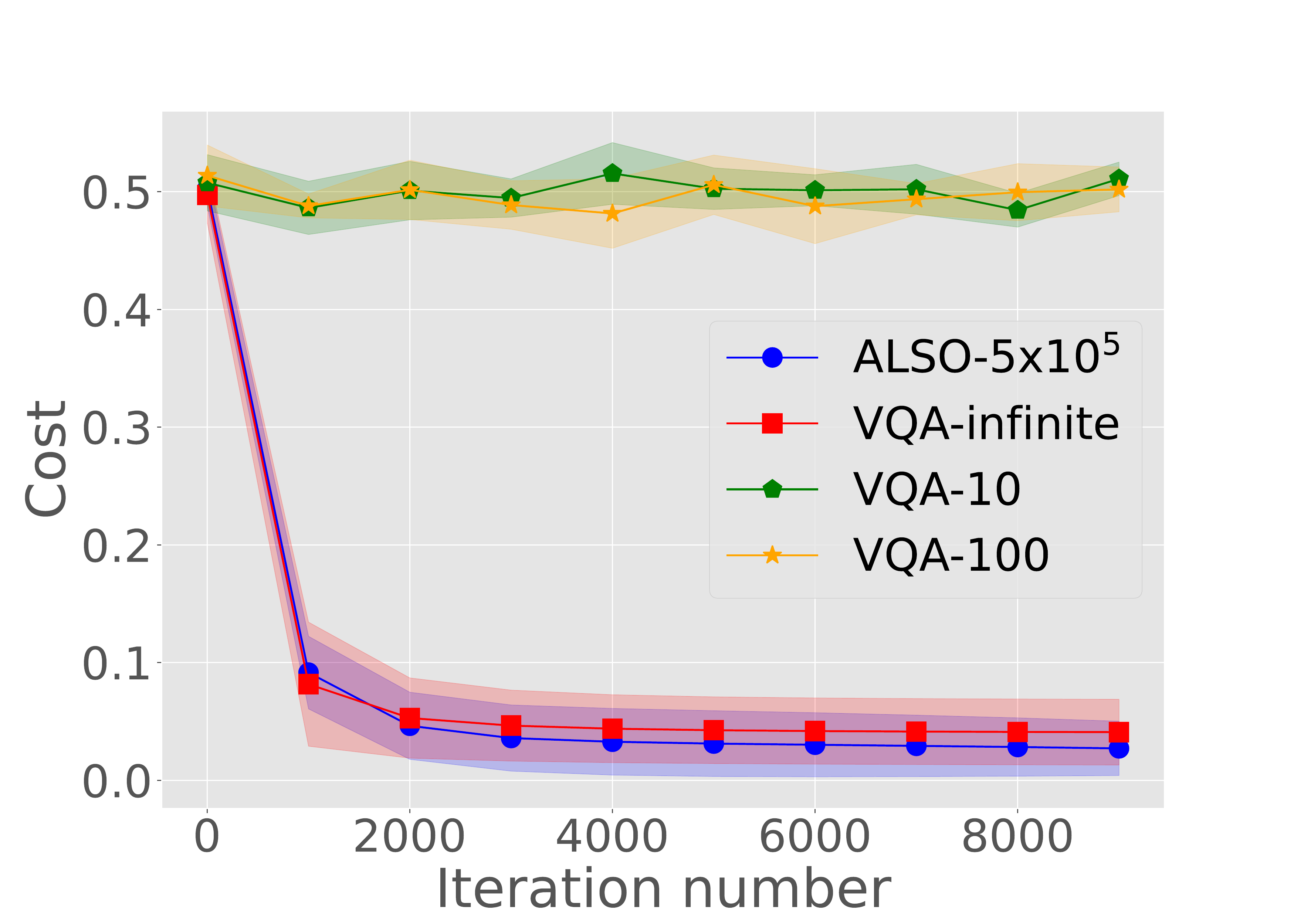} \\
        (d) & (e) & (f) \\ 
    \end{tabular}
    \caption{Simulation results for state preparation (a-c) and quantum autoencoder (d-f) using SPSA. Each graph corresponds to 5 instances of a problem. VQA-$K$ consumes $K$ copies (samples) per function evaluation while ALSO-$T$ consumes $T$ copies (samples) in total. In (a) and (c), we compare the performance of ALSO with standard VQA in the case of $8$-qubit problems. Here, VQA-$10$, VQA-$50$ and VQA-$100$ will consume $4.8 \times 10^5$ ($4.8 \times 10^5$), $2.4 \times 10^6$ ($2.4 \times 10^6$) and $4.8 \times 10^6$ ($4.8 \times 10^6$) copies (samples) respectively while ALSO-$10^5$ consumes only $10^5$ ($10^5$) copies (samples), and still outperforms VQA considerably. Continuing in the $8$-qubit scenario, In (b) and (d), we compare the performance of ALSO with the ideal VQA that consumes infinite copies, and we see that ALSO is able to almost match the results of this ideal VQA using a modest $5 \times 10^5$ ($5 \times 10^5$) copies (samples). In (c) and (f), we plot results of similar experiments carried out on $30$-qubit states. In this case, VQA consumes $5.4 \times 10^6$ ($1.8 \times 10^6 $) and $5.4 \times 10^7$ ($1.8 \times 10^7 $) copies (samples) respectively. Note that here, only iteration numbers that are multiples of $1000$ are plotted.} \label{fig:spp_ae}
    \end{figure*}

    \subsection{Quantum autoencoder} \label{subsec:qae}
        Autoencoder is a popular dimensionality reduction technique in classical machine learning \cite{Hinton1993}. Using deep neural networks, autoencoders learn low dimensional representations of high dimensional input data, which should ideally keep hold of the original characteristics of the data. This can also be seen as a form of data compression.
        Recently, there have been numerous works on extending this concept to quantum data~\cite{Romero2017,Wan2017,Verdon2018,Pepper2019,Lamata2018}. We focus on the version presented in~\cite{Romero2017}, more specifically, its implementation using alternating layered ansatzes described in~\cite{Cerezo2021}.
        
        The idea behind this version of quantum autoencoder is to compress $n$-qubit quantum states into $n_B <n$ qubit states. Consider an ensemble of $n$-qubit states $\mathcal{Z} = \{ (p_i, \ket{\psi_i}) \}$ with each state being prepared in registers $A$ and $B$ having $n_A$ and $n_B$ qubits respectively, where $n=n_A + n_B$. Let $\rho_{\mathcal{Z}} = \sum_{i} p_i \ket{\psi_i} \bra{\psi_i}$. As a measure of the compression effect, we consider $1 - f_{\rho_{\mathcal{Z}}, \mathds1[A] J[B]} (\boldsymbol{\theta})$, where $J= \frac{1}{n_B}\sum_{i = 1} ^ {n_B}  \ket{0}_i \bra{0}$ is defined as in the state preparation problem, to be a sensible cost function that not only forces the population of all the states to be in the qubits in the register $B$, but also involves the same $1$-local observables that we have used for state preparation. Again, this cost function has been
         used in~\cite{Cerezo2021}.

\section{Simulation Results}
\setlabel{Simulation Results}{sec:simulation_results}

In this section, we discuss the experimental results comparing the performance of ALSO and the standard VQA in the two use cases discussed above.

\subsection{Experiments set-up}        
For all experiments, each brick-like sub-circuit $S(\boldsymbol{\theta}_{ij})$ (cf.~Fig.~\ref{fig:ala}) has the form given in Fig.~\ref{subfig:s}. The simulation results presented in this section (except for Table~\ref{table:powell}) have used Simultaneous Perturbation Stochastic Approximation~\cite{Spall1992} (SPSA), where the converging sequences used for state preparation and quantum autoencoder are, respectively, $c_r=a_r=r^{-0.5}$ and $c_r=a_r=r^{-0.3}$.

In the following, we denote by ALSO-$T$ the ALSO algorithm that uses $T$ shadows and by VQA-$K$ the VQA algorithm that consumes per function evaluation $K$ state copies (for state preparation) or $K$ samples from $\mathcal{Z}= \{ (p_i, \ket{\psi_i})\}$  (for quantum autoencoder). In addition, we write VQA-infinite for the VQA algorithm which has access to an infinite number of state copies.

Let $R$ be the total number of iterations of SPSA. Since SPSA requires $2$ function evaluations per iteration, for state preparation and quantum autoencoder, VQA-$K$ will consume $2KRn$ and $2KRn_B$ state copies respectively.
    
\subsection{State preparation experiments}

For the state preparation problem, we first consider the case when $n=8$, i.e., the target state is an 8-qubit state. In each experiment, the target state is compatible with an alternating layered ansatz. We repeat the experiments for five different target states and our results are shown in Fig.~\ref{fig:spp_ae}(a,b), where each plot corresponds to five different instances of a problem. At any value on the $x$-axis, we plot the mean of infidelity/cost values across the five different experiments that were carried out. The coloured area of the plot is marked on top and bottom by the mean plus and minus the standard deviation of the $5$ values at each point respectively.

In Fig.~\ref{fig:spp_ae}(a), VQA-$10$ consumes $2KRn=2 \times 10 \times 3000 \times 8 = 4.8\times 10^5$ state copies and in a similar manner, the other VQA algorithms consume $2.4 \times 10^6$ and $4.8 \times 10^6$ state copies, which are 4.8x, 24x, and 48x of that ALSO consumes. Furthermore, from Fig.~\ref{fig:spp_ae}(b), we can see that ALSO closely matches the outcome of  VQA-infinite with only $5\times 10^5$ state copies.

Moving on from the $8$-qubit scenario, we then carry out similar experiments for 30-qubit systems. Fig.~\ref{fig:spp_ae}(c) shows the results, where, as in \cite{Cerezo2021}, all states involved are computational basis states. 

We note in this case, VQA-$10$ consumes $2 \times 10 \times 9000 \times 30=5.4 \times 10^6$ state copies while ALSO-$T$ remains unchanged with the change of $n$ from 8 to 30. From the figure, it is clear that the similar conclusion for 8-qubit state preparation also applies to 30-qubit state preparation. In particular, ALSO (with $5\times 10^5$ samples) significantly outperforms VQA with 100x more samples. 
        
\subsection{Quantum autoencoder experiments} 
For quantum autoencoder, similar experiments are carried out for both 8- and 30-qubit systems. Ensembles containing two pure states $\ket{\psi_1}$ and $\ket{\psi_2}$ are chosen with $p_1=1/3$ and $p_2=2/3$, and $n_B$ is set as 4 and 10, respectively, for 8- and 30-qubit systems. We repeat the experiments for five different ensembles. The results are summarised in Fig.~\ref{fig:spp_ae}(d-f), which have the same explanation as those in Fig.~\ref{fig:spp_ae}(a-c). Note that the cost values plotted here are the actual cost $f_{\rho,O}(\theta)$ and not their estimations.
From the figure, we can see similar conclusion we have obtained for state preparation also holds for quantum autoencoder. It seems that in this case ALSO with $10^5$ samples significantly outperforms VQA with $48\times 10^5$ samples and ALSO with $5\times 10^5$ samples can often beat VQA-infinite!

\subsection{Resource consumption for the same objective}

\begin{figure}[htb] 
    \centering
    \begin{tabular}{c}
         \includegraphics[width=\columnwidth]{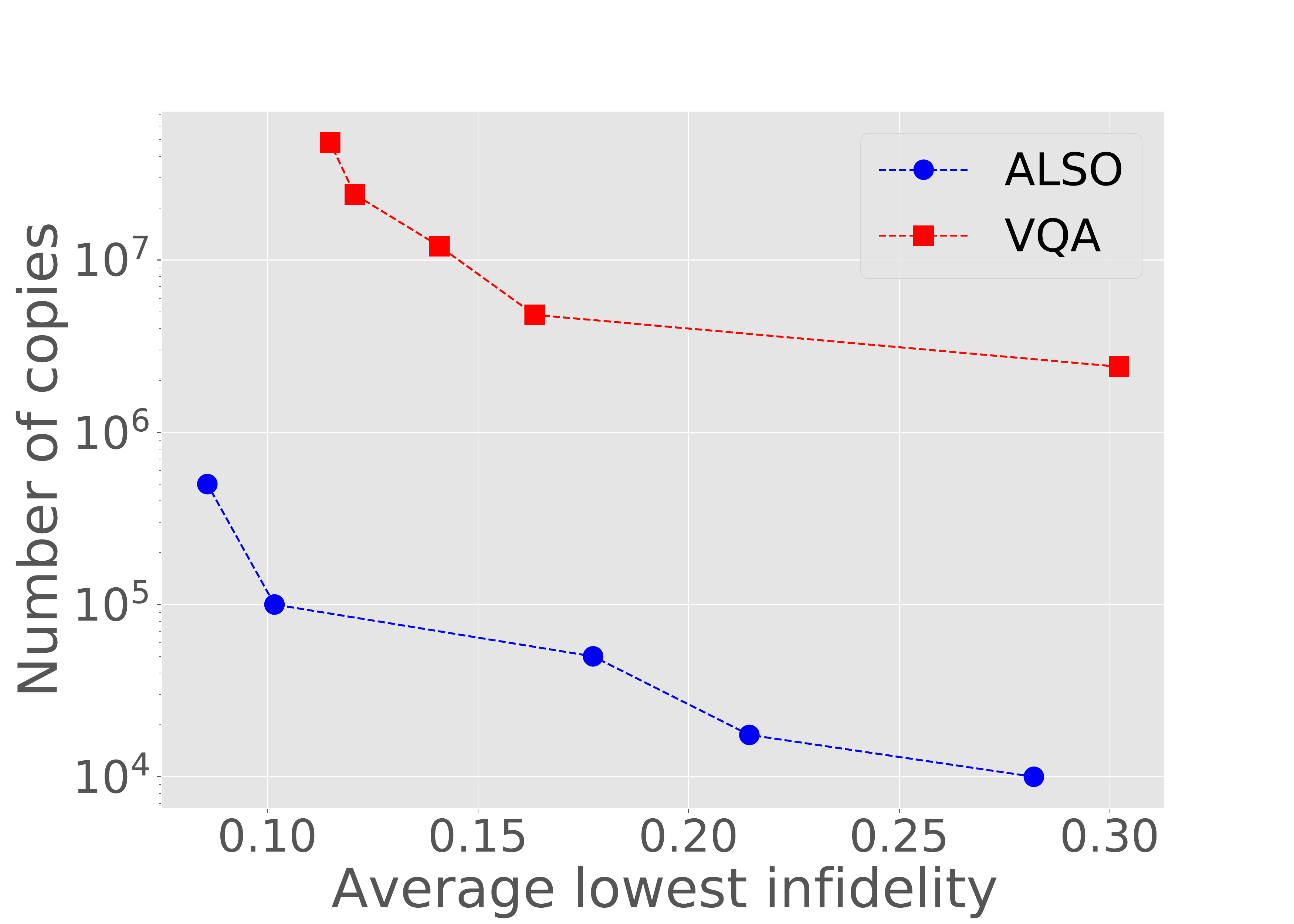} \\ (a) 8-qubit state preparation \\
    \includegraphics[width=\columnwidth]{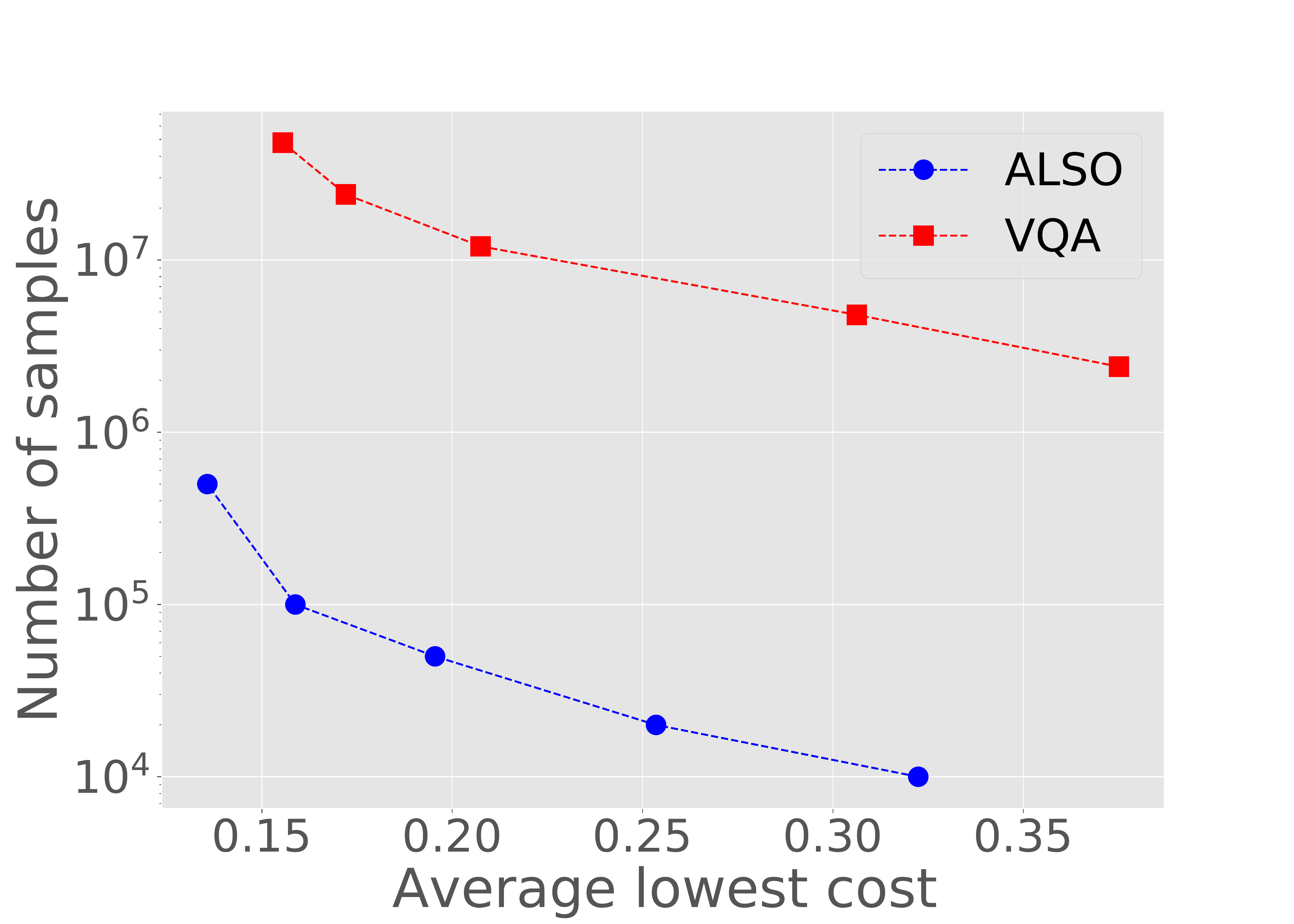} \\
    (b) 8-qubit quantum autoencoder
    \end{tabular}
    \caption{Resource requirement for different objectives. On the $x$-axis, we plot the least average infidelity (cost) of $5$ instances of the corresponding problem. On the $y$-axis, we plot the number of copies (samples) that were required to achieve them using SPSA. We see that ALSO achieves order of magnitude savings in the number of copies (samples).} \label{fig:fid_vs_meas_8q}

\end{figure}

In above, we compared the performance of ALSO and VQA algorithms with predetermined  resources. The efficiency of ALSO over VQA can also be illustrated by comparing the resource consumption required for the same objective. Given an objective, which can be either the average lowest infidelity or the average lowest cost, we carry out experiments to check how many state copies or samples are required for ALSO or VQA to achieve the objective. The results are presented in Fig.~\ref{fig:fid_vs_meas_8q}, where each point represents the average of five instances.  It is clear that ALSO achieves a huge advantage in the number of state copies or samples that were required to achieve the specific levels of quality. 

\subsection{More iterations by using Powell's method}
        
All the simulation results discussed above have used SPSA to find the optimal parameters. In each case, we set the maximum iterations to be the same for ALSO and VQA. From a practical point of view, this is unfair as ALSO can carry out more iterations with the given number of state copies or samples.

To further demonstrate this advantage of ALSO, we turn to Powell's method~\cite{Powell1964} to optimize the parameters. We carry out $8$-qubit state preparation as well as quantum autoencoder optimizations and the results are presented in Table~\ref{table:powell}. In the infidelity (cost) columns, each entry is an average optimal infidelity (cost) of $5$ instances of the problem, and we give an approximation of the average number of copies consumed (except for ALSO where exactly $5\times 10^5$ copies are consumed) to achieve these values in the \#copies (\#samples) columns.

We set $5\times10^4$ as an upper limit on the total number of function evaluations for VQA. But, since ALSO does not consume any copies for more iterations, we don't set any limit in the case of ALSO. As we can see, our approach greatly outperforms VQA in this case. Interestingly, in the case of VQA, the optimizers terminated in $5\times10^3-3\times10^4$ function evaluations in most cases. Only for the state preparation problem and with $10^5$ copies consumed per function evaluation, we saw the optimizer exceeding the $5\times10^4$ limit. We also observe that VQA with Powell's method performs very poorly compared to VQA with SPSA when $K$ is small, which is possibly due to the inherent ability of SPSA to deal with noisy functions.

\begin{table}
\scalebox{0.9}{
  \begin{tabular}{c|c|c|c|c}
 &
      \multicolumn{2}{c|}{state preparation  } &
      \multicolumn{2}{c}{quantum autoencoder}  \\  \hline
    & \#copies & infidelity & \#samples & cost  \\ \hline
    VQA-$10^2$ & $5\times10^5$ & 0.921  & $5.6\times10^5$& 0.494 \\ 
             VQA-$10^3$ &$1.3\times10^7$ & 0.348 &$4.6\times 10^6$ & 0.408 \\ 
             VQA-$10^4$ &$3.3\times 10^8$ & 0.094 &$10^8$ & 0.250 \\ 
             VQA-$10^5$ & $4 \times 10^9$& 0.069 & $2.1\times10^9$ & 0.188 \\
             ALSO & $5\times10^5$ & 0.004 & $5\times10^5$ & 0.117
  \end{tabular}
  }
        \caption{Simulation results comparing the performance of ALSO and the standard VQA when using Powell's method, where infidelity (cost) is the average lowest infidelity (cost) of five instances, and \#copies (\#samples) is the number of state copies (samples) used by the algorithm.}
        \label{table:powell}
\end{table}

\section{Conclusion and Future direction}
\setlabel{Conclusion and Future direction}{sec:conclusion}

In this work, we proposed ALSO --- an efficient method to train alternating layered VQAs that is exponentially better than the standard way of training VQAs in terms of the number of copies of input state consumed (or, in some other applications, number of executions of the quantum computer). The saving of state copies is especially useful when multiple rounds of the same optimization algorithm are required for various choices of hyperparameters, or when one has to experiment with different algorithms altogether. Moreover, ALSO is implementable using fewer and simpler quantum operations; in fact, only single-qubit measurements according to Pauli bases (in the classical shadow preparation stage) are required in ALSO. Another interesting benefit of the classical shadow technique, and so ALSO in particular, is that the produced classical shadows can be reused in different (independent) tasks. For example, the same set of classical shadows can be used in both finding the state preparation circuits and building quantum autoencoders.
	
In terms of future directions, we are trying to design similar resource efficient protocols for other trainable ansatzes such as the Quantum CNN ansatz used in~\cite{Cong2019,Pesah2021}. Another topic of interest would be to extend this method to techniques in other areas of quantum information that use large amounts of copies or executions such as device calibration and error correction.

\section{Code availability}
    Source code to implement ALSO and replicate the results presented in this work can be found \href{https://github.com/afradnyf/ALSO}{here}.
\section{Acknowledgement}
    We thank Afham and Richard Kueng for valuable discussions. 
    This work was partially supported by the Australian Research Council (Grant Nos: DP180100691 and DP220102059). AB was partially supported by the Sydney Quantum Academy PhD scholarship.  		

\bibliography{references}

\end{document}